\newtheorem{theorem}{Theorem}
\newtheorem{proof}{Proof}
\newtheorem{remark}{Remark}
\newtheorem{proposition}{Proposition}
\begin{document}

\title{User Scheduling and Trajectory Optimization for Energy-Efficient IRS-UAV Networks with  SWIPT}
\author{S. Zargari, A. Hakimi,  C. Tellambura, \textit{Fellow, IEEE}, and S. Herath,~\IEEEmembership{Member,~IEEE}
\thanks{This work was supported in part by Huawei Technologies Canada Company, Ltd.}
\thanks{Shayan Zargari, Azar Hakimi, and Chintha Tellambura are with the Department of Electrical and Computer Engineering, University of Alberta, Edmonton, AB T6G 1H9, Canada (e-mail: zargari@ualberta.ca; hakimina@ualberta.ca; ct4@ualberta.ca).}
\thanks{ Sanjeewa Herath is with the Ottawa Wireless Advanced System Competency Centre, Huawei Canada, Ottawa, ON K2K 3J1, Canada (e-mail: sanjeewa.herath@huawei.com).}\vspace{-3mm}}
\maketitle

\begin{abstract}
This paper investigates user scheduling and trajectory optimization for a network supported by an intelligent reflecting surface (IRS) mounted on an unmanned aerial vehicle (UAV). The IRS is powered via the simultaneous wireless information and power transfer (SWIPT)  technique. The IRS boosts users' uplink signals to improve the network's longevity and energy efficiency. It simultaneously harvests energy with a non-linear energy harvesting circuit and reflects the incident signals by controlling its reflection coefficients and phase shifts. The trajectory of the UAV impacts the efficiency of these operations. We minimize the maximum energy consumption of all users by joint optimization of user scheduling, UAV trajectory/velocity, and IRS phase shifts/reflection coefficients while guaranteeing each user's minimum required data rate and harvested energy of the IRS. We first derive a closed-form solution for the IRS phase shifts and then address the non-convexity of the critical problem. Finally, we propose an alternating optimization (AO) algorithm to optimize the remaining variables iteratively. We demonstrate the gains over several benchmarks. For instance, with a  50-element IRS, min-max energy consumption can be as low as 0.0404 (Joule),  a 7.13\%  improvement over the No IRS case  (achieving 0.0435 (Joule)). We also show that IRS-UAV without EH performs best at the cost of circuit power consumption of the IRS (a 20\% improvement over the No IRS case).
\end{abstract}
\begin{IEEEkeywords}
Intelligent reflecting surface (IRS),~unmanned aerial vehicle (UAV),~simultaneous wireless information and power transfer (SWIPT).
\end{IEEEkeywords}
\section{Introduction}

\IEEEPARstart{T}{he} sixth-generation (6G) wireless will appear around 2030\cite{Yang6G}, and intelligent reflecting surface (IRS), also called reconfigurable intelligent surface (RIS), is one of the critical enabling technologies  \cite{Gong1}. IRS comprises many low-cost reflecting elements. They can intelligently adjust the incident signals' phase and amplitude to enhance the received signal strength at the receiver \cite{Zhang}. IRSs can be mounted on walls and similar places,  increasing the spatial degrees of freedom. Thus, the wireless channel itself becomes an optimization variable. IRSs are nearly passive devices,  requiring a small amount of energy, which opens the possibility of energy harvesting (EH) to power them.  

Likewise, unmanned aerial vehicles  (UAVs),  also known as aerial base stations (BS), offer another 6G wireless option. UAVs augment the coverage area and provide reliable communication along with fast and on-demand deployment, which is suitable for relaying, information gathering, and data distribution \cite{ Bariah}. However, the drawbacks of these UAV schemes are a few:  

\begin{enumerate}
\item  The limited onboard power of UAVs restricts their service time  \cite{Yang6G,Khalili2021Hetnets}. In particular, EH techniques such as simultaneous wireless information and power transfer (SWIPT) may help  \cite{yu2021,zargari2,zargari3,zargari4,zargari2021robust}. However,  harvesting sufficient energy may take a long time \cite{Ramezani}.

\item The power consumption of the UAV is for the communication tasks and propulsion energy for hovering and supporting high mobility over the air. The latter is usually several orders of magnitude higher than the former. This issue imposes critical limits on communication performance \cite{you2021enabling}.

\item  Relays operate in either decode-and-forward (DF) or amplify-and-forward (AF) modes, reducing the overall spectral efficiency (SE). This deficiency may be addressed using full-duplex (FD) relays \cite{Mohammadi2015}. However,  complex self-interference cancellation (SIC)  techniques are needed to mitigate the SI \cite{Zhang:2015}.
\end{enumerate}

The challenges mentioned above strongly motivate the integration of UAVs and IRS. Compared to conventional relays, IRS is more energy-efficient \cite{Gong1}. Furthermore, IRS is inherently FD, does not need SIC, and does not add noise. All these reasons suggest the benefits of  IRS/UAV networks. However, there are two ways of doing that. 
\begin{enumerate}
\item[i)] Terrestrial IRS UAV networks:  Many existing works (see  \cite{Pan2, Pan1,Kwan} and references therein)  consider a terrestrial IRS at a fixed location and the UAV as an aerial BS. However, the UAV as an aerial BS has stringent size, weight, and power constraints, which impose critical limits on its flight time or endurance.   

\item[ii)] IRS-UAV networks: These deploy an IRS on the  UAV \cite{JiaoFZZ20,Zina_Mohamed,Xiaowei_Pang}, which establishes line-of-sight (LoS) links with the ground-level users. Hereafter, the trajectory of the IRS-UAV can be more flexibly optimized. Also, the IRS-UAV  can achieve $360^o$ panoramic full-angle reflections, i.e., one IRS-UAV can manipulate signals between any pair of ground nodes.
\end{enumerate}

{We now further elaborate on the benefits of the IRS-UAV concept.  First, the onboard power of the UAV limits its operating time as an aerial BS.  Instead, with an IRS,  the UAV can serve users by reflection signals.  This may save the UAV  hovering and flight energy consumption \cite{Khalili2021Hetnets}. Second, IRS-UAV operates in the  FD mode, which further extends the effective SE \cite{Gong1}. Third, deploying multiple antennas at the UAV may increase the cost \cite{{Ramezani}}. However, an IRS comprises reflecting elements that provide spatial gains. The advantages of the IRS, such as being lightweight, small size, low energy consumption, and cohesive geometry will enhance the performance of UAV-only networks.} 

On the other hand, the energy consumption of the IRS  is often assumed negligible due to its passive reflection mode. However, this assumption is not entirely correct as this energy consumption increases with the number of reflecting elements,   a critical issue \cite{Gong1}. For example, for $3$ and $5$-bit resolution phase shifting, each elements consumes about  $1.5$ to $6$ mW, respectively~\cite{Hu2021}. Furthermore,  active IRS has emerged recently, which can amplify the incident signals \cite{chen2022active}, albeit at the cost of increased power consumption.  Because energy efficiency is a key performance indicator in 6G wireless, we investigate potential EH benefits using  SWIPT. 

\subsection{Contributions}

To overcome the limitations above,   this work investigates the IRS-UAV  network (Fig. 1).   In particular, the IRS is powered by means of EH  \cite{Hu2021, Ramezani}. This improves overall energy efficiency (EE) yet does not cost extra power signals or resources because EH is based on the SWIPT technique.   For a  more realistic model, we assume that the EH process follows a non-linear model, unlike many other published works. Note that in our system setup (Fig. 1),  the function of the UAV is to move the IRS   close to the users (via three-dimensional (3D) trajectory optimization). We investigate the uplink transmission because that allows the power minimization of mobile users, which is critical to extending their battery life \cite{Jae_Cheol}. Indeed, the IRS-UAV establishes a reflection link that increases the throughput and reduces the transmit power of the users \cite{zargari2,Zheng2,zargari,Wei,Kwan}. Therefore, performing EH and reflecting signals simultaneously at the IRS are conflicting objectives affecting users' transmission power.  When the IRS uses power splitting (PS)-based SWIPT,   a portion of harvested power is used for ID and the remainder for the reflection. However,  since users are transmitting in the uplink, their transmit power affects the EH/reflection performance of the IRS. \textbf{Overall, this paper studies a new approach to realizing energy-efficient IRS-UAV networks.}

The main contributions of this work are stated as follows.

\begin{itemize}
\item Unlike  \cite{Pan2, Pan1,Kwan}, which study separate IRS and UAV networks, we investigate IRS-UAV networks capable of EH using the PS architecture. The PS circuit diverts a  fraction of incident RF power at the IRS  to the EH circuit. We also improve the EH process modeling because previous studies  {\cite{JiaoFZZ20,Zina_Mohamed,Xiaowei_Pang,zargari,Zheng1,Zheng2,zargari2,zargari4,Gong,Ramezani,yu2021}}  adopt a  simplified linear EH model. Linear models do not capture the two most significant nonlinear effects of actual EH circuits. First, the harvested power saturates as the input power grows large. Second, harvested power drops to zero as the input power falls below the sensitivity level of the EH circuit. To ease these issues,  we adopt a non-linear EH model at the IRS, which results in a fundamentally different optimization problem \cite{zargari3,Nonlinear}.

\item While \cite{Pan2,Pan1,mahmoud2021,Nnamani,Zhang3} optimize the UAV trajectory in a two-dimensional (2D) plane,  we consider the altitude of the UAV in  3D space. On the other hand, in practice, the UAV velocity is variable  \cite{Pan2,Pan1,Kwan,mahmoud2021,Taniya2021,Nnamani,Zhang3}. Accordingly, we also study the behavior of the UAV velocity under different trajectories.

\item We optimize user scheduling and trajectory optimization to ensure that the energy consumption of all users is low. To this end, we minimize the maximum energy consumption of users by jointly optimizing the user scheduling, UAV trajectory/velocity, and IRS phase shifts/reflection coefficient while guaranteeing the minimum data rate and energy requirement of each user and the IRS, respectively.

\item To attack this highly coupled optimization problem, we propose an alternating optimization (AO) algorithm to decouple optimization variables. The basic idea of AO is to break down the overall problem into simpler subproblems and solve the sequentially until convergence. By following this approach, we first derive a closed-form solution for the IRS phase shifts. Second, we formulate three sub-optimal problems and obtain sub-optimal solutions for the user scheduling, UAV trajectory/velocity, and IRS reflection coefficients, respectively. Finally, we leverage the difference of convex (DC) programming and successive convex approximation (SCA) to tackle the last two sub-optimal problems. While for the user scheduling, we perform a continuous relaxation of the binary constraint into a continuous interval.

\item We present numerical results to evaluate and assess the performance of the proposed network. We find that it outperforms baseline schemes in terms of the energy consumption of users except for the case of No EH at the IRS.

\end{itemize}

\subsection{Related Works}

The related works fall into three groups. We next discuss each group separately; namely,  i)  Terrestrial IRS UAV networks; ii) IRS-UAV networks;  iii) EH technologies for  IRS networks.

\textit{i) \textbf{Terrestrial IRS UAV networks:}} In these,  the IRS is placed anywhere on the ground  (terrestrial IRS) and assists UAV communication \cite{Pan2, Pan1,Kwan}. For instance,  \cite{Pan2}  maximizes the minimum average data rate of the IRS by joint design of the UAV trajectory and IRS phase shifts/scheduling. The authors in \cite{Pan1} consider an UAV-aided IRS network supporting terahertz (THz) communications where the minimum average achievable data rate of all users is maximized by jointly optimizing the IRS phase shifts, UAV trajectory, and power as well as THz sub-bands allocation. Reference  \cite{Kwan} considers an IRS-UAV-based orthogonal frequency division multiple (OFDM) access network and maximizes the network sum rate by jointly optimizing the UAV trajectory, IRS phase shifts, and resource allocations.

{\textit{ii) \textbf{IRS-UAV networks:}}
In these, the IRS is mounted on the UAV \cite{Taniya2021,yu2021, Nnamani, Zhang3, Khalili2021Hetnets, JiaoFZZ20, Zina_Mohamed, Xiaowei_Pang}. In  \cite{Taniya2021},  an IRS-UAV network is studied along with only-IRS and only-UAV modes. The network outage probability, ergodic capacity, EE, and optimization over some critical network parameters are derived. Also, the symbol error rate and outage probability are analyzed for an IRS-UAV assisted network in \cite{mahmoud2021}. In \cite{yu2021}, an IRS-UAV downlink network is considered where the IRS-UAV supports the energy and information transmission to a single  Internet of Things (IoT) device. This work maximizes the average achievable data rate by optimizing the transmit beamforming, PS ratio, IRS phase shifts, and UAV trajectory.  In \cite{Nnamani}, the achievable secrecy rate of an IRS-UAV network is maximized by jointly optimizing the UAV trajectory, IRS phase shifts, and collaborative beamformers at the sensor nodes. In \cite{Zhang3}, a new 3D wireless relaying network aided by an aerial IRS is proposed where the worst-case signal-to-noise ratio (SNR) is maximized by joint design of the location as well as 3D passive beamforming at the aerial IRS and active beamformers at the source node. Furthermore, to have an energy-efficient network,  multiple UAV-IRS objects are introduced in a macro-cell power domain non-orthogonal multiple access (NOMA) Heterogeneous network~\cite{Khalili2021Hetnets}.  Finally, the resource allocation for transmitting power minimization problem is considered where dueling deep Q-Network learning and semi-definite relaxation techniques (SDR) are employed for this purpose. Besides,  \cite{JiaoFZZ20} considers an IRS-UAV network that operates based on the NOMA. The rate of the strong user is maximized by joint design of the UAV horizontal location, beamforming vectors, and IRS phase shift matrix. To enable energy-efficient communications with cell-edge users, \cite{Zina_Mohamed} optimizes an IRS-UAV network. In addition, a comparison between the achievable rate of IRS-UAV with terrestrial IRS UAV networks while tackling eavesdroppers is investigated in \cite{Xiaowei_Pang}. The simulation results show that the IRS-UAV integration outperforms other benchmarks. } 

For both types of networks mentioned above, EE is a critical issue. Therefore, we briefly mention some relevant works below. 

\textit{ii) \textbf{EH technologies for  IRS networks:}} Because of massive numbers of devices and their energy consumption,  e.g., IoT, wireless power transfer (WPT)  is a fundamental solution\cite{Wang}.   WPT is the foundation of WPCN and SWIPT networks, where wireless devices harvest energy from radio frequency (RF) signals. In a WPCN, each device follows the harvest-and-then-transmit protocol  \cite{Wang}. However, in PS-based SWIPT networks, each device harvests energy and decodes information simultaneously \cite{zargari}. The IRS can be deployed in both WPCN and SWIPT networks to improve performance. For instance,  a WPCN utilizes an IRS with a  single antenna hybrid access point (AP) and two users capable of EH and information decoding (ID) \cite{Zheng1}. The throughput is then maximized by joint design of the power allocations, transmit time,
and IRS phase shifts. The total energy consumption and the EE of an IRS-assisted PS-based SWIPT network are optimized by jointly optimizing the PS ratios, active BS beamformers, and IRS phase shifts \cite{zargari2,zargari3}. In addition, the BS transmission power is minimized in \cite{zargari2021robust} for an  IRS-aided downlink multiple-input single-output (MISO) PS-based SWIPT network.  Reference \cite{zargari4} introduces a multi-objective optimization framework to balance the sum-rate maximization and the total harvested energy maximization.

Since the IRS consumes a small amount of power only,  we can power it with  SWIPT \cite{Gong, Wei,Ramezani}. This option makes the IRS a   self-sustainable device that can operate for a long time and a  hybrid energy/information relay. For such a self-sustainable IRS-empowered network, \cite{Gong} maximizes the SNR  by jointly optimizing the active and passive beamformers at the AP and the IRS, respectively. Reference \cite{Wei} maximizes the network sum rate by joint design of the AP beamformers and  IRS phase shifts/EH schedule. Moreover,  \cite{Ramezani}  maximizes the sum rate by jointly optimizing the resource allocation and IRS phase shifts based on time switching (TS) and PS architectures. Besides, \cite{Hu2021} studies the robust and secure   MISO downlink communication with a  self-sustainable IRS, which is powered by EH. The sum rate is maximized by jointly optimizing the AP beamformers and  IRS phase shifts/EH schedule while ensuring security. The result indicates that a significant tradeoff between the sum rate and self-sustainable IRS exists.

This paper is organized as follows. Section \ref{sec2} presents the network model, the channel model, and the transmission scheme. Section \ref {ProblemForm} formulates the key  problem.  In Section \ref{sec4}, we propose our approach and solution to obtain the optimization problem. In Section \ref{Complexity}, we analyze the complexity of the proposed algorithm. In Section \ref{sec5},  numerical results are presented. Finally, Section \ref{sec6} concludes the paper.

{\it Notations}: Vectors and matrices are expressed by
boldface lower case letters $\mathbf{a}$ and capital letters $\mathbf{A}$, respectively. For a square matrix $\mathbf{A}$, $\mathbf{A}^H$ and $\mathbf{A}^T$ are Hermitian conjugate transpose and transpose of a matrix, respectively. $\mathbf{I}_M$ denotes the $M$-by-$M$ identity matrix. $\text{diag}(\cdot)$ is the diagonalization operation. The Euclidean norm of a complex vector and the absolute value of a complex scalar are denoted by $\|\cdot\|$ and $|\cdot|$, respectively. The distribution of a circularly symmetric complex Gaussian (CSCG) random vector with mean $\boldsymbol{\mu}$ and covariance matrix $\mathbf{C}$ is denoted by $\sim \mathcal{C}\mathcal{N}(\boldsymbol{\mu},\,\mathbf{C})$. 
$\nabla_{\mathbf{x}}$ and $\partial _{x}$ denote the gradient vector and the respective partial derivative with respect to $\mathbf{x}$, respectively. The expectation operator is denoted by $\mathbb{E}[\cdot]$. Besides, $\mathbb{C}^{M\times N}$ and ${\mathbb{R}^{M \times 1}}$ represent $M\times N$ dimensional complex matrices and $M\times 1$ dimensional real vectors, respectively. Finally, $\lfloor x \rceil$ denotes the nearest integer of x, and $\mathcal{O}$ expresses the big-O notation.

\section{network Model}\label{sec2}

In  Fig. 1, an IRS-empowered UAV network comprising a single-antenna BS, an IRS-UAV, and $K$ users indexed by $\mathcal{K}=\{1,... , K\}$ is considered.  The  IRS is equipped with an EH circuit to power its operations (Fig. \ref{IRS_diagram}) \cite{Gong, Wei, Ramezani}. In particular, the IRS-UAV establishes LoS links for the users to transmit information in the uplink to the BS. Without loss of generality, we consider a 3D Cartesian coordinates network to describe the locations of transceivers. The locations of the BS and user $k$  are denoted by $\mathbf{q}_{b}=[{q}^x_{b},{q}^y_{b},{q}^z_{b}]^T \in {\mathbb{R}^{3 \times 1}} $ and $\mathbf{q}_{k}=[{q}^x_{k},{q}^y_{k},{q}^z_{k} ]^T \in {\mathbb{R}^{3 \times 1}}$,  $\forall k \in \mathcal{K}$, respectively, which are assumed to be fixed. The location and velocity of the IRS-UAV for $0 < t < T$, where $T$ is the  total time slot, are represented by $\mathbf{q}_u(t) = [q^x_u(t),q^y_u(t),q^z_u(t)]^T  \in {\mathbb{R}^{3 \times 1}}$ and $\mathbf{v}_u(t) =[v^x_u(t),v^y_u(t),v^z_u(t)]^T  \in {\mathbb{R}^{3 \times 1}} $, respectively. Since the trajectory of the IRS-UAV regularly varies over time, the total time slot of the IRS-UAV operation is divided into $N$ discrete points i.e., $\delta_T={T}/{N}$ sufficiently small time slots in which the IRS-UAV position is assumed static \cite{Pan1,Kwan}. Accordingly, we have $\mathbf{q}_u[n] = [q^x_u[n],q^y_u[n],q^z_u[n]]^T$, where $H_{\min} \leq q^z_u[n]\leq H_{\max},~\forall {{n}}\in\mathcal{N}=\{1,2,..., N\}$ indicates the minimum and maximum altitude limitation of the IRS-UAV. 
\begin{remark}
Note that although the UAV can fly at higher altitudes, it approaches each user as close as possible to maximize EH. Unlike previous works \cite{Pan2,Pan1,mahmoud2021,Nnamani,Zhang3}, we optimize the UAV altitude, making EH at the IRS more realistic. Accordingly, the IRS maximum altitude does not significantly impact our optimization problem.
\end{remark}

\begin{figure}[t]
\centering
\includegraphics[width=3in]{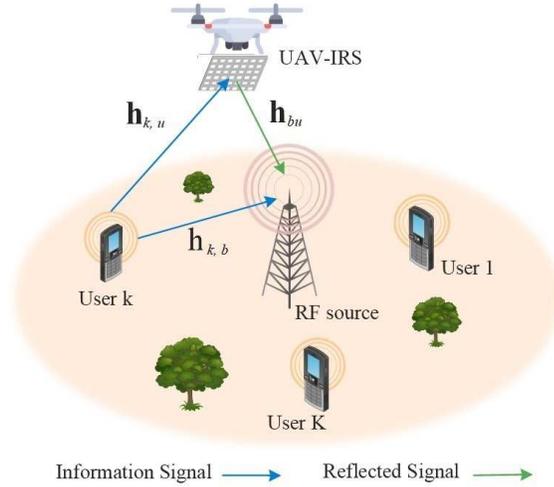}
\caption{A multiuser IRS-UAV network.}\label{network_model}
\end{figure}

Subsequently, the constraints on  the IRS-UAV position and velocity can be  expressed  as
\begin{align}\
& {\bf{q}}_u[1] = {{\bf{q}}_s},\quad {\bf{q}}_u[N + 1] = {{\bf{q}}_e},\\
& {{\bf{q}}_u[n + 1] - {\bf{q}}_u[n]}=  {\bf{v}}_u[n]\delta_T ,\;{\rm{ }}\forall {{n}}\in\tilde{\mathcal{N}}=\{2,3,..., N-1\},\nonumber\\
&\left\|{\bf{v}}_u[n + 1] - {\bf{v}}_u[n]\right\| \leq {a_{\max }}\delta_T,\;{\rm{ }}\forall {{n}}\in\tilde{\mathcal{N}},\nonumber\\& \left\|{\bf{v}}_u[n] \right\| \leq V_{\max},\;\forall {{n}}\in\mathcal{N},\nonumber
\end{align}
respectively, where $V_{\text{max}}$,  $a_{\max}$, ${{\bf{q}}_s}$, and ${\bf{q}}_{e}$   denote the maximum flying speed,  the maximum flight
acceleration, and the first and final positions of the IRS-UAV, respectively. 

\subsection{Channel Model}
This network model presupposes the availability of all channel state information (CSI)\footnote{The IRS network may acquire CSI in two ways, depending on whether the IRS reflecting elements are equipped with receive RF chains or not. If the IRS has RF chains, traditional techniques can be used to estimate the user-IRS and IRS-BS channel links. Otherwise, uplink pilots and  IRS reflection patterns can be designed to estimate the channel links \cite{Zhang}.
}. We assume quasi-static block fading channels where the channels remain constant in each block and vary over blocks. Since the duration of each fading block is typically much smaller than $\delta_T$, the number of fading blocks at time slot $n\gg 1 $  in practice \cite{Zhan2018}. The  baseband equivalent channel links at time slot $n$ from the BS to the IRS-UAV, user $k$ to the IRS-UAV, and user $k$ to the BS are given by  ${\mathbf{h}_{bu}}[n]\in \mathbb{C}^{M \times 1}$, ${\mathbf{h}}_{k,u}[n]\in \mathbb{C}^{M\times 1}$, and ${ h}_{k,b}[n]\in \mathbb{C}^{1\times 1}$, respectively.

In addition, we consider a $M_x\times M_y$ reflection units at IRS where $M_x$ and $M_y$ denote the number of reflecting elements along the $x$-axis and $y$-axis,~respectively. In contrast to \cite{Pan2} where a uniform linear array (ULA) is considered at the IRS, we study a more general case and assume that reflection units at the IRS are spanned as a uniform planar array (UPA). The reflection coefficients matrix of IRS is given by ${\mathbf{\Theta}}[n]=\text{diag}(\boldsymbol{\theta}[n])$, where 
${ 	\boldsymbol{\theta}[n]\!=\!\! \left[\! {\rho_{1,1}[n]} e^{j\beta_{1,1}[n]}, \!...,\! {\rho_{M_x,M_y}[n]} e^{j\beta_{M_x,M_y}[n]}\! \right]^T \! \in \!\mathbb{C}^{M_xM_y\!\times \!1}}.$ Specifically, ${\rho_{m_x,m_y}[n]} \in [0,1]$ and $\beta_{m_x,m_y}[n] \in (0,2\pi]$, $\forall m_x,m_y \in \mathcal{M}=\{1,...,M_xM_y\}$ are the reflection coefficient and phase shift of the $(m_x,m_y)$-th reflecting element at the IRS, respectively. 
\begin{figure}[t]
\centering
\includegraphics[width=3in]{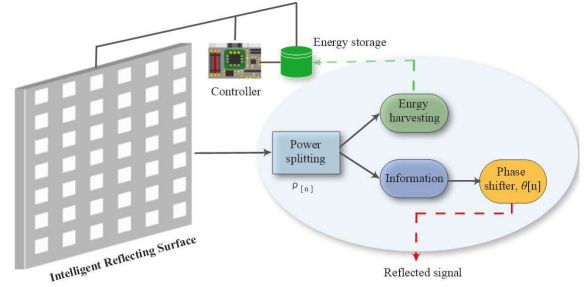}
\caption{Architecture of the IRS.}\label{IRS_diagram}
\end{figure}   	
For simplicity, we denote $m_x, m_y$ and $M_x, M_y$ as $m$ and $M$, respectively. All channel links follow the Rician model, and the channel link between the BS and the IRS-UAV as well as user $k$ and the IRS-UAV at time slot $n$ are given by 
\begin{align} 
&\mathbf{h}_{bu}[n]=\nonumber\\
&\sqrt{\beta_{bu}[n]}\left( \sqrt{\frac{K_{bu}}{K_{bu}+1}}\mathbf{h}_{bu}^{\mathrm{LoS}}[n]+\sqrt{\frac{1}{K_{bu}+1}}\mathbf{h}_{bu}^{\mathrm{NLoS}}[n]\right),\\
&\mathbf{h}_{k,u}[n]=\nonumber\\&\sqrt{\beta_{k,u}[n]}\left( \sqrt{\frac{K_{k,u}}{K_{k,u}+1}}\mathbf{h}_{k,u}^{\mathrm{LoS}}[n]+\sqrt{\frac{1}{K_{k,u}+1}}\mathbf{h}_{k,u}^{\mathrm{NLoS}}[n]\right),
\end{align}
respectively, where  $K_{bu}$ and $K_{k,u}$ represent Rician factors. In addition, we have $\beta_{bu}[n]=\frac{\beta_{0}}{{d_{bu}^{\alpha_{bu}}[n]}}$ and $\beta_{k,u}[n]=\frac{\beta_{0}}{{d^{\alpha_{k,u}}_{k,u}[n]}}$, where $\alpha_{bu}$ and $\alpha_{k,u}$ denote the path loss exponents and $\beta_{0}$ indicates the channel power at the reference distance of one meter. Furthermore, $d_{bu}[n]$ and $d_{k,u}[n]$ are the distance between the IRS-UAV and the BS-user $k$ and the IRS-UAV, respectively, which are given by $d_{bu}[n]={\|\mathbf{q}_b[n]-\mathbf{q}_{u}[n]\|^2}$ and $d_{k,u}[n]={\|\mathbf{q}_{k}[n]-\mathbf{q}_u[n]\|^2}$, respectively\footnote{Due to significant path loss and reflection loss, it is assumed that the power of the reflected signals by the IRS two or more times is negligible and hence ignored \cite{Kwan}.}. The elements of $\mathbf{h}_{bu}^{\mathrm{NLoS}}$ and $\mathbf{h}_{k,u}^{\mathrm{NLoS}}$ are assumed to be independent and identically distributed with zero mean and unit variance, i.e., $\mathbf{h}_{k,u}^{\mathrm{NLoS}},\mathbf{h}_{bu}^{\mathrm{NLoS}}\sim \mathcal{C}\mathcal{N}(\mathbf{0},\,\mathbf{I}_{M_xM_y})$. As shown in Fig. \ref{tetha}, we have
\begin{align}
\mathbf{h}_{bu}^{\mathrm{LoS}}[n]&\!=\!\left[\! 1,{e^{ \!-\!j \varrho{\sin {\theta}[n]\cos {\varphi}[n]}}},...,{e^{ \!-\! j\varrho({M_x} - 1){\sin {\theta}[n]\cos {\varphi}[n]}}}\right]^T  \nonumber\\ & \otimes\! \left[ \!1,{e^{ \!-\! j\varrho{\sin {\theta}[n]\sin {\varphi}[n]}}},...,{e^{ \!-\!j \varrho({M_y} - 1){\sin {\theta}[n]\sin {\varphi}[n]}}}\right]^T\!\!\!,\\
\mathbf{h}_{k,u}^{\mathrm{LoS}}[n]&\!=\nonumber\\
&\!\!\!\!\left[ 1,{e^{ \!-\!j \varrho{\sin {\omega _k}[n]\cos {\zeta _k}[n]}}},...,{e^{\! - \!j\varrho({M_x} \!-\! 1){\sin {\omega _k}[n]\cos {\zeta _k}[n]}}}\right]^T  \nonumber\\ &\!\!\!\!\!\!\!\!\!\!\otimes \!\left[ 1,{e^{ \!-\!j \varrho{\sin {\omega _k}[n]\sin {\zeta _k}[n]}}},...,{e^{ \!-\! j\varrho({M_y} \!-\! 1){\sin {\omega _k}[n]\sin {\zeta _k}[n]}}}\right]^T\!\!\!.
\end{align}
Specifically, $\varrho=\frac{2\pi f_cd}{c}$ where $d$ is the distance between two adjacent reflecting elements at the IRS and $f_c$ is the carrier frequency. $\theta[n]$ and $\varphi[n]$ are the vertical and horizontal angle of departure (AoD) between the IRS-UAV and the BS at time slot $n$, respectively. Similarly, $\omega_k[n]$ and $\phi_k[n]$ are the vertical and horizontal AoD between user $k$ and the IRS-UAV at time slot $n$, respectively. In particular, the AoDs/AoAs are given by 	
\begin{figure}[t]
\centering
\includegraphics[width=3.5in]{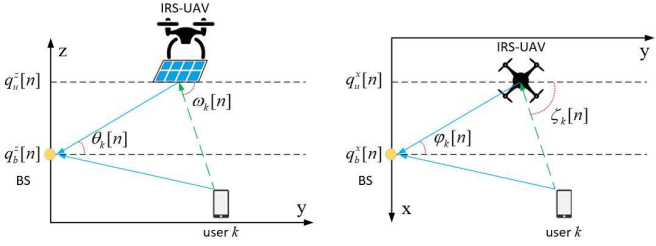}
\caption{The vertical and horizontal AoDs/AoAs between the IRS-UAV, the BS, and user $k$ are presented on the left-hand side and the right-hand side, respectively.  }\label{tetha}
\end{figure} 
\begin{align}
&\sin\theta[n]=\left( \frac{{q}^z_{b}-q^z_u[n]}{d_{bu}[n]}\right),\quad\sin\omega _k[n]=\left( \frac{q^z_u[n]}{d_{k,u}[n]}\right) ,\\
&\sin \varphi[n]=\left( \frac{{q^x_u[n]- {q}^x_{b}}}{{\sqrt {{{(q^x_u[n]- {q}^x_{b})}^2} + {{(q^y_u[n]- {q}^y_{b})}^2}} }}\right),\\
&\cos \varphi[n]=\left( \frac{{q^y_u[n]- {q}^y_{b}}}{{\sqrt {{{(q^x_u[n]- {q}^x_{b})}^2} + {{(q^y_u[n]- {q}^y_{b})}^2}} }}\right),\\
&\sin \zeta_k[n]=\left( \frac{ {q}^x_{k}-{q^x_u[n]}}{{\sqrt {{{(q^x_u[n]- {q}^x_{k})}^2} + {{(q^y_u[n]- {q}^y_{k})}^2}} }}\right),\\
&\cos \zeta_k[n]=\left( \frac{{{q}^y_{k}-q^y_u[n]}}{{\sqrt {{{(q^x_u[n]-{q}^x_{k})}^2} + {{(q^y_u[n]- {q}^y_{k})}^2}} }}\right).
\end{align}
Moreover, the channel link between the BS and user $k$ at time slot $n$ is given by
\begin{align}
&{h}_{k,b}[n]\!=\!\sqrt{\beta_{k,b}}\!\!\left(\! \sqrt{\frac{K_{k,b}}{K_{k,b}+1}}{h}_{k,b}^{\mathrm{LoS}}[n]\!+\!\sqrt{\frac{1}{K_{k,b}+1}}{h}_{k,b}^{\mathrm{NLoS}}[n]\right)\!\!,
\end{align}
where $K_{k,b}$ is the Rician factor, ${h}_{k,b}^{\mathrm{LoS}}[n]=e^{-j\frac{2\pi f_c d_{k,b}[n]}{c}}$, and ${h}_{k,b}^{\mathrm{NLoS}}[n]\sim \mathcal{C}\mathcal{N}({0},\,{1})$. In addition, we have $\beta_{k,b}=\frac{\beta_{0}}{d_{k,b}^{\alpha_{k,b}}}$, where $\alpha_{k,b}$ denotes the path loss exponent and  $d_{k,b}$ is the distance between  user $k$ and the BS. 
\subsection{Transmission Scheme}
The received signal at the $m$-th reflecting element of the IRS-UAV at time slot $n$ for all $ m\in \mathcal{M}$, $n\in \mathcal{N}$, and  $k\in \mathcal{K}$ can be expressed as
\begin{equation}
 y^{(m)}_{u,k}[n]=\sqrt{P_k} h^{(m)}_{k,u}[n]x_k[n],
\end{equation}
where $P_k$ and $x_k$ denote the transmit power and information of user $k$, respectively. Based on the received signal, the reflection coefficients of reflecting elements\footnote{Positive-intrinsic-negative (PIN) diodes, field-effect transistors (FET), micro-electromechanical network (MEMS) switches, and variable resistor loads can be employed for tuning the reflection coefficients of reflecting elements at the IRS \cite{Gong1,Ramezani}.} at the IRS can be adjusted in such a way as to reflect and harvest energy simultaneously. To simplify the circuit design of the IRS, we assume  that all reflecting elements have the same reflection coefficient, i.e., $\bar{\rho}[n]\!=\!\rho_{1}[n]\!=\!\cdots\!=\!\rho_{M}[n]$, which is a practical assumption \cite{Ramezani,Gong}. Accordingly, the linear EH model of all reflecting elements at time slot $n$ is given by $ E^h_{u,k}[n]=\eta s_k[n]P_k(1-\bar{\rho}^2[n])\|{ \mathbf{ h}}_{k,u}[n]\|^2,$ where $\eta\in (0,1]$ is the energy conversion efficiency. 

The EH and signal reflection parts at the IRS are managed by its microcontroller. The  power splitter is a simple passive device \cite{keysight_2021}, whose  basic schematic is shown in Fig. \ref{ps_stru}  \cite{in3otd}. The Tr1 transformer is used to bring the 50 $\Omega$ input impedance down to the 25 $\Omega$ needed to feed the second transformer, Tr2. This latter is doing the actual power splitting feeding the two outputs. The resistor between the outputs is needed for isolation \cite{in3otd}. With such a circuit, the power split ratio can be readily controlled.  The received signal at the IRS  can thus be split into two streams with one being reflected to the user and the remaining one fed into the EH circuit.  Besides, $s_k[n]$ denotes the user scheduling indicator at time slot $n$ which is represented as
 \begin{figure}[t]
\centering
\includegraphics[width=2.7in]{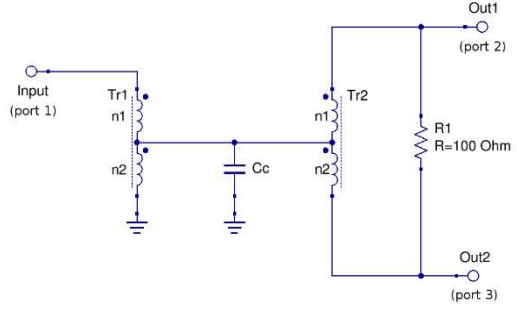}
\caption{The power splitter basic schematic comprises transformers  (Tr), resistors (R), and a capacitor (Cc) \cite{in3otd}. }\label{ps_stru}
\end{figure} 
\begin{equation*}
s_k[n]=
\begin{cases}
1, & \text{if user $k$ is scheduled at time slot $n$},  \\
0, & \text{otherwise}.
\end{cases}
\end{equation*}
However, the linear EH model is not practical since it cannot capture the effect of the non-linearity behavior of circuits~\cite{Wang2020}. Hence, we resort to a parametric non-linear EH model based on the sigmoidal function \cite{zargari3,Nonlinear}. Then, the total harvested energy at the IRS can be written as
\begin{subequations}\label{EH}\begin{align}
&\bar{ E}^h_{u,k}[n] = \frac{{\Upsilon_k[n] - {M_k}{\Omega_k}}}{{1 - {\Omega_k}}}, ~{\Omega_k} = \frac{1}{{1 + e^{c_k \nu_k}}}{,~\forall n, k}, \nonumber\\& \Upsilon_k[n] = \frac{{{M_k}}}{{1 + \exp ( - {c_k}({ E^h_{u,k}[n]} - {\nu_k}))}}{,~\forall n, k,}
\end{align}
\end{subequations}
where $\Upsilon_k$ is the traditional logistic function and ${\Omega_k}$ is a constant to guarantee a zero input/output response. In addition, $c_k$ and $\nu_k$ are constants parameters related to the circuit characteristics and $M_k$ is a constant denoting the maximum harvested energy at the IRS. Note that all parameters can be obtained by a curve fitting tool \cite{Nonlinear}. 

\begin{remark} Assume that the energy consumption of each reflecting element at the IRS is denoted by $p_{m}$, so the overall energy consumption of the IRS can be written as $E_\text{min}=M_xM_yp_{m}$. To ensure that the harvested energy is able to power the IRS, we need to have the following constraint: $\bar{ E}^h_{u,k}[n]\geq E_\text{min}$,  a non-convex and non-linear constraint. However, it can be rewritten as $E^h_{u,k}[n]\geq {\nu_k} - \frac{1}{{{c_k}}}\ln (\frac{{{M_k} - E_\text{min}}}{{E_\text{min}}})$ \cite{zargari3}, which thus makes optimization problem more tractable. 
\end{remark}

The received signal at the BS which is the superposition of the direct and reflected signal can be written as
\begin{align}
y_{k,b}[n]&=\sqrt{P_k}\left({h}_{k,b}[n]+\bar{\rho}[n]\left( \mathbf{h}_{k,u}[n]\right) ^T {\mathbf{\Theta}}[n]\mathbf{h}_{bu}[n]\right)  x_k[n]\nonumber\\&+z_k[n],~\forall n,k,
\end{align}
where $z_k{[n]}\sim \mathcal{N}(0,\,\sigma^{2})\,$ is the received noise with variance 
$\sigma^{2}$. The achievable data rate at time slot $n$ can be expressed as
\begin{equation}
R_{k}[n]=s_k[n]\log_{2}\left( 1+\text{SNR}_k[n]\right),~\forall  k,
\end{equation}
where signal-to-noise ratio (SNR) is given by
\begin{equation}\text{SNR}_k[n]={ \frac{P_k}{{\sigma^{2}}}\left|  {h}_{k,b}[n]+\bar{\rho}[n]\mathbf{h}^H_{k,u}[n]{\mathbf{\Theta}}[n]\mathbf{h}_{bu}[n]\right| ^2},~\forall  k.	
\end{equation}
\begin{remark}
Note that because of the variation of the IRS-UAV locations, the distribution of the channel remains constant in each time slot but changes over different time slots. Consequently, over each time slot, the transmission rate by the scheduled user can be adapted to the IRS-UAV location. When the trajectory, velocity, user scheduling, and transmission rate are obtained, the IRS-UAV, with the help of the BS, schedules the corresponding users and notifies each user of the optimized transmission rate over time slots by utilizing the downlink control links. 
\end{remark}
In the following theorem, we derive the average achievable data rate for the formulated optimization since the involving channel links vary relatively fast due to the small-scale fading phenomenon.
\begin{theorem}
The upper bound of the average achievable data rate for each user is given as follows: 
\begin{align}	 \mathbb{E}\{R_{k}[n]\}&\leq s_k[n]\log_{2}\left( 1+\frac{P_k}{\sigma^2}\left( |b_k|^2+  \frac{ \beta_{k,b}}{K_{k,b}+1}\right.\right.\nonumber\\  & \hspace{-2em}\left.\left.+\frac{\bar{\rho}^2[n](K_{k,u}+K_{bu}+1)M_xM_y\beta_{bu}[n]\beta_{k,u}[n]}{(K_{k,u}+1)(K_{bu}+1)} \right) \right),\label{200}
\end{align}
where
\begin{align}
b_k&=\sqrt{\frac{K_{k,b}\beta_{k,b}}{K_{k,b}+1}}h_{k,b}^{\mathrm{LoS}}[n]+\sqrt{\frac{K_{k,u}K_{bu}\beta_{bu}[n]\beta_{k,u}[n]}{(K_{k,u}+1)(K_{bu}+1)}}\nonumber\\&\times\bar{\rho}[n](\mathbf{h}_{k,u}^{\mathrm{LoS}}[n])^{H}{\mathbf{\Theta}}[n]\mathbf{h}_{bu}^{\mathrm{LoS}}[n].
\end{align}

\end{theorem}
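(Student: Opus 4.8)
The plan is to combine Jensen's inequality with a direct second-moment computation of the composite (direct plus reflected) channel. Since $s_k[n]$ is a deterministic scheduling indicator, I would pull it outside the expectation and focus on $\mathbb{E}\{\log_2(1+\text{SNR}_k[n])\}$. Because $\log_2(1+x)$ is concave in $x$, Jensen's inequality gives $\mathbb{E}\{\log_2(1+\text{SNR}_k[n])\}\leq \log_2(1+\mathbb{E}\{\text{SNR}_k[n]\})$, which is exactly the form of the claimed upper bound; it therefore remains only to evaluate $\mathbb{E}\{\text{SNR}_k[n]\}=\frac{P_k}{\sigma^2}\mathbb{E}\{|g_k[n]|^2\}$, where $g_k[n]=h_{k,b}[n]+\bar{\rho}[n]\mathbf{h}^H_{k,u}[n]\mathbf{\Theta}[n]\mathbf{h}_{bu}[n]$ is the effective channel seen at the BS.

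To compute $\mathbb{E}\{|g_k[n]|^2\}$, I would split each Rician channel into its deterministic LoS part and its zero-mean CSCG NLoS part, and expand $g_k[n]$. The expectation of $g_k[n]$ over the NLoS randomness keeps only the fully deterministic contributions, namely the LoS direct term and the LoS$\times$LoS reflected term, which is precisely $b_k$. Writing $\mathbb{E}\{|g_k[n]|^2\}=|b_k|^2+\mathrm{Var}(g_k[n])$ then isolates the task to the variance of the fluctuating part, which comprises four zero-mean random terms: the direct-link NLoS term and the three reflected-link products LoS$\times$NLoS, NLoS$\times$LoS, and NLoS$\times$NLoS.

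The crux is to show these four terms are mutually uncorrelated and to evaluate each of their second moments. Independence and zero mean of the two NLoS vectors $\mathbf{h}_{k,u}^{\mathrm{NLoS}}$ and $\mathbf{h}_{bu}^{\mathrm{NLoS}}$ (and of the direct-link NLoS scalar) make every cross-covariance vanish after taking the expectation over whichever NLoS vector appears linearly. For the individual moments I would use that the NLoS vectors have covariance $\mathbf{I}_{M_xM_y}$, that after factoring out $\bar{\rho}[n]$ the phase-only matrix satisfies $\mathbf{\Theta}[n]\mathbf{\Theta}^H[n]=\mathbf{I}_{M_xM_y}$, and that each LoS steering vector, being a Kronecker product of unit-modulus vectors, has squared norm $\|\mathbf{h}_{k,u}^{\mathrm{LoS}}\|^2=\|\mathbf{h}_{bu}^{\mathrm{LoS}}\|^2=M_xM_y$. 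With these, the LoS$\times$NLoS and NLoS$\times$LoS terms each contribute $M_xM_y$ times their respective Rician-weight products, and the bilinear NLoS$\times$NLoS moment follows from $\mathbb{E}\{(\mathbf{h}_{k,u}^{\mathrm{NLoS}})^H\mathbf{\Theta}\mathbf{h}_{bu}^{\mathrm{NLoS}}(\mathbf{h}_{bu}^{\mathrm{NLoS}})^H\mathbf{\Theta}^H\mathbf{h}_{k,u}^{\mathrm{NLoS}}\}=\mathrm{tr}(\mathbf{I}_{M_xM_y})=M_xM_y$ via iterated expectation.

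Finally, summing the three reflected variances collapses the Rician weights into $K_{k,u}+K_{bu}+1$ over $(K_{k,u}+1)(K_{bu}+1)$; adding the direct-link NLoS variance $\beta_{k,b}/(K_{k,b}+1)$ and substituting into $\frac{P_k}{\sigma^2}\mathbb{E}\{|g_k[n]|^2\}$ reproduces the argument of the logarithm in the statement. I expect the main obstacle to be the careful bookkeeping of the four-term expansion, in particular verifying that all cross-covariances vanish and correctly identifying which steering-vector norm and which $\mathbf{\Theta}$-unitarity identity each term requires, rather than any single difficult estimate.
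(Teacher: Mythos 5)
Your proposal is correct and follows essentially the same route as the paper's proof: Jensen's inequality on the concave $\log_2(1+x)$, decomposition of the effective channel into the deterministic LoS part $b_k$ plus four zero-mean terms (direct NLoS, LoS$\times$NLoS, NLoS$\times$LoS, NLoS$\times$NLoS), vanishing cross-terms by independence, and per-term second moments computed via the identities $\mathbb{E}\{\mathbf{h}^{\mathrm{NLoS}}(\mathbf{h}^{\mathrm{NLoS}})^H\}=\mathbf{I}_{M_xM_y}$, $(\mathbf{h}^{\mathrm{LoS}})^H\mathbf{h}^{\mathrm{LoS}}=M_xM_y$, and the unitarity of the phase-shift matrix. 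Your variance-decomposition framing ($\mathbb{E}\{|g_k|^2\}=|b_k|^2+\mathrm{Var}(g_k)$) is just a cleaner packaging of the paper's explicit five-term expansion, so no substantive difference exists.
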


\begin{proof}
See Appendix \ref{sec:Theorem1}.
\end{proof}

As can be observed, this data rate is a function of the large-scale fading coefficients, LoS channel components, and IRS phase shifts. We note that the above approximation will be tight if the SNR is sufficiently high  \cite{Pan2,Han2019}. Compared to previous works \cite{JiaoFZZ20,Zina_Mohamed,Xiaowei_Pang}, where the UAV carries the IRS, we average out the small-scale fading. However, we apply previously recognized techniques to obtain the upper bound of the average achievable data to a new problem and system model  \cite{Pan2,Han2019}.  Subsequently, based on Theorem 1, we obtain the average harvested energy at the IRS which is given by
\begin{equation}
\mathbb{E}\{E^h_{u,k}[n]\}\leq s_k[n]P_k(1-\bar{\rho}^2[n])M_xM_y\beta_{bu}[n].
\end{equation}

\section{Problem Formulation}\label{ProblemForm}
In this section, we formulate an optimization problem to jointly optimize the user scheduling $\boldsymbol{s}=\{s_k[n],\forall k,n\}$, UAV trajectory $\boldsymbol{q}_u=\{\mathbf{q}_u[n],\forall n\}$, UAV velocity   $\boldsymbol{v}_u=\{\mathbf{v}_u[n],\forall n\}$, IRS phase shifts $\boldsymbol{\Phi}=\{{\mathbf{\Theta}}[n],\forall n\}$, and IRS reflection coefficient $\boldsymbol{\rho}=\{\bar{\rho}[n],\forall n\}$ to minimize
the maximum energy consumption of all users. { By defining $E_k \triangleq \delta_T P_k $ in (Joule)}, and $R_{\min,k}\triangleq\frac{S_k}{B\delta_T}$
in (bits/Hz), where $B$ denotes the channel bandwidth in (Hz); then the problem can be mathematically formulated as below:
\begin{align}\label{M_U}
\!\text{(P1)}:~& 	\underset{\substack{ \boldsymbol{s},\boldsymbol{q}_u,\boldsymbol{v}_u\\\boldsymbol{\Phi},\boldsymbol{\rho},\kappa }} {\text{minimize}}~\kappa \nonumber \\
&\text{s.t.}~C_{1}:~\sum_{n=1}^{N}s_k[n]E_k \leq \kappa,~\forall k,\nonumber\\
&\quad~C_{2}:~\sum_{n=1}^{N}\mathbb{E}\{E^h_{u,k}[n]\}\geq {\nu_k} \!- \!\frac{1}{{{c_k}}}\ln (\frac{{{M_k} \!-\! E_\text{min}}}{{E_\text{min}}}),~\forall k,  \nonumber\\
&\quad~C_{3}:~\sum_{n=1}^{N}\mathbb{E}\{R_{k}[n]\}\geq R_{\min,k},~\forall k,\nonumber\\
&\quad~C_{4}:~\sum_{k=1}^{K}s_k[n]\leq 1,~\forall n,\nonumber\\
&\quad~C_{5}:~s_k[n]\in\{0,1\},~\forall k,n,  \nonumber\\
&\quad~C_{6}:~0\leq\beta_{m}[n]\leq 2\pi,~\forall m,n,\nonumber\\
&\quad~C_{7}:~0\leq\bar{\rho}[n]\leq 1,~\forall n,\nonumber\\
&\quad~C_{8}:~{\bf{q}}_u[1] = {{\bf{q}}_s},\quad {\bf{q}}_u[N + 1] = {{\bf{q}}_e},\nonumber\\
&\quad~C_{9}:~{{\bf{q}}_u[n + 1] - {\bf{q}}_u[n]}=  {\bf{v}}_u[n]\delta_T ,\;{\rm{ }}\forall {{n}},\nonumber\\
&\quad~C_{10}:~\left\|{\bf{v}}_u[n + 1] - {\bf{v}}_u[n]\right\| \leq {a_{\max }}\delta_T,\;{\rm{ }}\forall {{n}},\nonumber\\
&\quad~C_{11}:~\left\|{\bf{v}}_u[n] \right\| \leq V_{\max},\;\forall {{n}},
\end{align}
where  $C_{1}$ ensures that the energy consumption of each user does not exceed $\kappa$, where $\kappa$ is the slack variable indicating the maximum energy to be minimized. $C_{2}$ and $C_{3}$ ensure the minimum required data rate and harvested energy, respectively. $C_{4}$ and $C_{5}$ are user scheduling constraints. $C_{6}$ and $C_{7}$ are the phase shifts and reflection coefficient constraints at the IRS, respectively. $C_{8}$ indicates the initial and final locations of the IRS-UAV. $C_{9}$ denotes the relation between the UAV’s trajectory and its flight velocity.  $C_{10}$ and $C_{11}$
are the maximum flight velocity and the maximum flight acceleration, respectively.

\begin{figure}[t]
\centering
\includegraphics[width=3.3in]{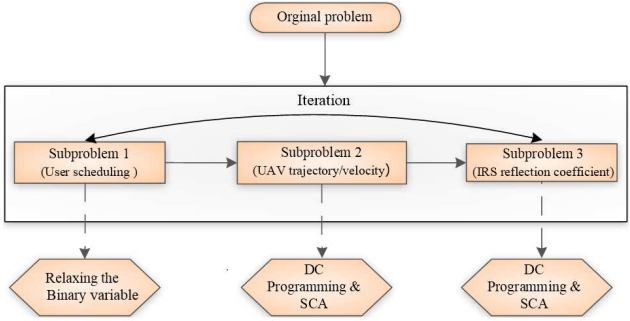}
\caption{A flow chart of the proposed AO algorithm.}\label{Algo_diagram}

\end{figure}

\section{Proposed Solution}\label{sec4}

The optimization problem (P1) is a non-convex and mixed-integer problem due to the non-linear multiplication of optimization variables in $C_{1}-C_{3}$ and $C_{5}$, which renders the solution challenging. Such problems are, in general, NP-hard and mathematically intractable. Although an exhaustive search in the feasible set of the variables may provide a globally optimal solution, it has exponential complexity. To enable real-time computations and reduce complexity, we propose an AO algorithm. 

The key idea of the AO algorithm is as follows. To solve, $\min_{x} f(x)$, where $x\in \mathbb R^{s},$   we partition  $x$ into  $m >1 $  blocks as $ x=(x_1,x_2,\ldots, x_m)^T$, where $x_k\in \mathbb R^{s_k}$  and $\sum_{k=1}^m  s_k =s.$ The minimization is then performed   over $x_1$ while $\{x_k| k\neq 1\}$   are kept constant over their previous values; next, it is performed   over $x_2$ while $\{x_k |k\neq 2\} $   are kept constant over their previous values.  This process thus cycles until convergence.   The  AO algorithm yields a  locally optimum solution. 

To apply the AO algorithm, we take two steps. First, we derive a closed-form solution for the IRS  phase shifts. Second, we divide the main problem into three sub-problems. In the first one, we relax the binary constraint of user scheduling into a continuous one and then solve it by CVX \cite{Boyd,Zhan2018}. In the second and third ones, we apply the DC programming and successive convex approximation (SCA) techniques to get suboptimal solutions for the UAV trajectory/velocity and IRS reflection coefficient, respectively.  Fig. \ref{Algo_diagram} shows the essential steps for the solution of (P1). The terrestrial BS executes the proposed iterative algorithm. Thus, the energy consumption of this execution is minuscule compared to the total power consumption of the BS, which must execute power-hungry operations, including RF processing, analog-to-digital conversions, and many others.  

\subsection{Optimizing Phase Shits:}
We next derive a closed-form solution for IRS phase shifts while other optimization variables remain fixed.  To do this, we maximize the primary data rate expression  (\ref{200}). As a result, we first rewrite $|b_k|$ given in (\ref{thetaa}), where
\begin{table*}	 
\centering
\begin{minipage}{1\textwidth}  
	\begin{align}				\displaystyle &|b_k|=\nonumber\\&\bigg|\underbrace{\sqrt{\frac{K_{k,b}\beta_{k,b}}{K_{k,b}+1}}e^{-j\frac{2\pi f_c d_{k,b}[n]}{c}}}_{F_1}+\underbrace{\sqrt{\frac{K_{k,u}K_{bu}\beta_{bu}[n]\beta_{k,u}[n]\bar{\rho}^2[n]}{(K_{k,u}+1)(K_{bu}+1)}}
		e^{-j\frac{2\pi f_c (d_{bu}[n]-d_{k,u}[n])}{c}}\times	\sum_{m_i=1}^{M_i}\sum_{m_j=1}^{M_j}e^{-j\frac{2\pi f_cd}{c}\vartheta_{m_i,m_j}[n]+j\beta_{m_i,m_j}[n]}}_{F_2}\bigg|\label{thetaa}
	\end{align}			\hrule
\end{minipage}
\end{table*}
\begin{align}
\vartheta_{m_i,m_j}[n]&=(m_i-1)\sin {\theta}[n]\cos {\varphi}[n]\nonumber\\&+(m_j-1)\sin {\theta}[n]\sin {\varphi}[n]\nonumber\\&+(m_i-1)\sin {\omega _k}[n]\cos {\zeta _k}[n]\nonumber\\&+(m_j-1)\sin {\omega _k}[n]\sin {\zeta _k}[n].
\end{align}
Then by using the triangle inequality, the upper bound of (\ref{thetaa}) can be obtained as follows: $|F_1+F_2|\leq |F_1|+|F_2|$ \cite{Boyd}, where the inequality holds if and only if the phase of first and second terms hold with equality which thus yields the optimal phase shifts\footnote{The phase shifts in closed-form expressions facilitate the control, synchronization, and channel estimation requirements as the channel link only depend on the UAV position. In this way, the phase control and required signaling overhead of the CSI acquisition can be reduced significantly.} as follows:
\begin{align}\label{phase}
\beta^\text{opt}_{m,k}[n]&=\frac{2\pi f_c(d_{bu}[n]-d_{k,u}[n])}{c}+ \frac{2\pi f_cd}{c}\vartheta_{m}[n]\nonumber\\&-\frac{2\pi f_c d_{k,b}[n]}{c},~\forall m.
\end{align}
Based on (\ref{phase}),  the phase shifts at the IRS should be tuned such that the direct path aligns with the reflected one. In this way, coherent combining improves the signal strength of each user.

\subsection{Optimizing  User Scheduling:}
We next derive the user scheduling for fixed UAV trajectory/velocity and IRS reflection coefficient. Consequently, (P1) can be recast as follows:
\begin{align}
\text{(P2)}:~& \underset{\boldsymbol{s},\kappa} {\text{minimize}}\:~\kappa\nonumber \\
&\text{s.t.}~C_{1}-C_{5}.
\end{align}
To derive a computationally efficient resource allocation and handle the non-convexity of (P2), we relax the binary constraint of the user scheduling into a continuous one, i.e., $0 \leq s_k[n] \leq  1$. Accordingly, the new optimization problem can be written as
\begin{align}
\text{(P3)}:~& \underset{\boldsymbol{s},\kappa} {\text{minimize}}\:~\kappa\nonumber \\
&\text{s.t.}~C_{1}-C_{4},\nonumber\\
&\quad~C_{5}:~0 \leq s_k[n]\leq 1,~\forall k,n.
\end{align}
Since (P3) is a standard linear program (LP), it can be efficiently solved by CXV \cite{Boyd}. The binary user scheduling variables can be recovered as follows. 
Indeed, there are $LN$ fading blocks in total with the time horizon $T$. If the obtained solution of the relaxed problem is not binary, $N_k[n] = \lfloor Ls_k[n]  \rceil $ fading blocks can be allocated to user $k$ in any time slot $n$ \cite{Zhan2018}.

\subsection{Optimizing UAV Trajectory and Velocity:}
We next optimize the UAV trajectory/velocity for fixed user scheduling and IRS reflection coefficient. Indeed, the UAV trajectory/velocity are optimized to maximize the weighted minimum of the data rate of all users, where the weight is inversely proportional to $R_{\min,k}$. In particular, the  optimization problem has the following structure:
\begin{align}\label{P4}
\text{(P4)}:~& \underset{\boldsymbol{q}_u,\boldsymbol{v}_u,\chi} {\text{maximize}}\:\:\chi\nonumber \\
&\text{s.t.}~C_{2},~C_{8}-C_{11},\nonumber\\ 
&\quad~C_{3}:~\frac{1}{R_{\min,k}}\sum_{n=1}^{N}\mathbb{E}\{R_{k}[n]\}\geq \chi,~\forall k.
\end{align}
Based on (\ref{phase}), the term $|b_k|^2$ in constraint $C_3$ can be stated as
\begin{align}\label{b}
|b_k|^2&=\frac{K_{k,b}\beta_{k,b}}{K_{k,b}+1}+\frac{M^2_xM^2_yK_{k,u}K_{bu}\beta_{bu}[n]\beta_{k,u}[n]\bar{\rho}^2[n]}{(K_{k,u}+1)(K_{bu}+1)}\nonumber\\&+2M_xM_y\sqrt{\frac{K_{k,b}K_{k,u}K_{bu}\beta_{k,b}\beta_{bu}[n]\beta_{k,u}[n]\bar{\rho}^2[n]}{(K_{k,b}+1)(K_{k,u}+1)(K_{bu}+1)}}.
\end{align}
To this end, the objective function in (P4) can be transformed into 
\begin{align} 
\mathbb{E}\{R_{k}[n]\}&\leq s_k[n]\log_{2}\bigg( \!\!1\! +\!\frac{P_k}{\sigma^{2}}\big(\! \left(  \psi_1[n]+\psi_2[n] \right)\beta_{bu}[n] \beta_{k,u}[n] 
\nonumber\\
& +\psi_3[n]\sqrt{\beta_{bu}[n]\beta_{k,u}[n]}+\psi_4\beta_{k,b}[n]\! \big)   \bigg)\overset{\Delta}{=} \Gamma_k[n], \label{29}
	\end{align} 
where 
\begin{align}	
&\psi_1[n]=\frac{K_{k,u}K_{bu}M_x^{2}M_y^{2}\bar{\rho}^2[n]}{(K_{k,u}+1)(K_{bu}+1)},\nonumber\\ &\psi_2[n]=\frac{\bar{\rho}^2[n](K_{k,u}+K_{bu}+1)M_xM_y}{(K_{k,u}+1)(K_{bu}+1)},\nonumber\\
&\psi_3[n]=2M_xM_y\sqrt{\frac{K_{k,b}K_{k,u}K_{bu}\beta_{k,b}\bar{\rho}^2[n]}{(K_{k,b}+1)(K_{k,u}+1)(K_{bu}+1)}},\nonumber\\
&\psi_4=\frac{K_{k,b} }{K_{k,b}+1}.
\end{align} 
Upon rearranging terms, problem (P4) can be rewritten as
\begin{align}\label{P5}
\text{(P5)}:~& \underset{\boldsymbol{q}_u,\boldsymbol{v}_u,\chi} {\text{maximize}}\:\:\chi \nonumber \\
&\text{s.t.}~C_{2},~C_{8}-C_{11},\nonumber\\ 
&\quad~C_{3}:~\frac{1}{R_{\min,k}}\sum_{n=1}^{N}\Gamma_k[n] \geq \chi,~\forall k.
\end{align}
Note that (P5) is not a convex optimization problem. To simplify the problem formulation, we propose $\beta_{bu}[n]=u[n]$ and $\beta_{k,u}[n]=r_k[n]$ as slack variables for trajectory planning. Thus, (\ref{29}) and constraint $C_2$ can be represented as
\begin{align}	
\!\!\!\!\bar{ \Gamma}_k[n]&\overset{\Delta}{=}  s_k[n]\log_{2}\bigg( 1+\frac{P_k}{\sigma^{2}}\big( \left( \psi_1[n]+\psi_2[n]\right)u[n] r_k[n]\nonumber\\
&\hspace{5em}+\psi_3[n]\sqrt{u[n]r_k[n]}+\psi_4\beta_{k,b}[n]\big) \bigg),\label{39}\\
\!\!\!\!\bar{C}_{2}\!\!:~& \!s_k[n]P_k(1\!-\!\bar{\rho}^2[n])M_xM_yu[n]\geq {\nu_k} \!-\! \frac{1}{{{c_k}}}\ln (\!\frac{{{M_k}\! -\! E_\text{min}}}{{E_\text{min}}}\!),\label{40}
\end{align} 
respectively. Therefore, (P5) can be reformulated as
\begin{align}\label{P6}
\!\!\!\!\text{(P6)}:~& 		\underset{\substack{ \boldsymbol{q}_u,\boldsymbol{v}_u,\chi\\u[n],r_k[n]}} {\text{maximize}}\:\:\chi\nonumber \\
&\text{s.t.}~\bar{C}_{2},~C_{8}-C_{11},\nonumber\\
&\quad~C_{3}:~\frac{1}{R_{\min,k}}\sum_{n=1}^{N}\bar{\Gamma}_k[n] \geq \chi,~\forall k,\nonumber\\
&\quad~C_{12}:~\beta_{bu}[n]\geq u[n],~ \beta_{k,u}[n]\geq r_k[n], \forall n, k.
\end{align}
By doing such transformation, constraint $C_{12}$ holds with the equality at the optimal solution, yielding that (P6) and (P5) are equivalent. This is because by increasing the value of $u[n]$ and $r_k[n]$, the value of the objective increases as well. However, problem (P6) is still non-convex due to  coupling between optimization variables, i.e., $\beta_{bu}[n]$ and $\beta_{k,u}[n]$. To overcome it,  we obtain a lower bound by using first-order Taylor expansion and the SCA technique. Consider a function $f(x)$ that depends on variable $x$, the first-order Taylor series at point $a$ is given by $
f(x)=f(a)+\partial _{x}f(x)\big|_{x=a}(x-a)$. By applying this, the lower-bound of constraint $C_{12}$ can be obtained as follows:
\begin{align}\label{key}
\beta_{k,u}[n]&\geq\nonumber\\&	\hspace{-5mm}\frac{\beta_{0}}{({\|\mathbf{q}^{(i)}_u[n]\!-\!\mathbf{q}_{k}[n]\|^2})^{\frac{\alpha_{k,b}}{2}}}-\frac{\alpha_{k,u}\beta_{0}}{2(\|\mathbf{q}^{(i)}_u[n]\!-\!\mathbf{q}_{k}[n]\|^2)^{\frac{{\alpha_{k,b}}}{2}+1}} \nonumber\\&\hspace{-5mm}\times({\|\mathbf{q}_u[n]\!-\!\mathbf{q}_{k}[n]\|^2}\!-\!{\|\mathbf{q}^{(i)}_u[n]\!-\!\mathbf{q}_{k}[n]\|^2})\nonumber\\
\beta_{bu}[n]&\geq\nonumber\\&\hspace{-5mm}\frac{\beta_{0}}{({\|\mathbf{q}^{(i)}_u[n]\!-\!\mathbf{q}_{b}[n]\|^2})^{\frac{\alpha_{bu}}{2}}}\!-\!\frac{\alpha_{bu}\beta_{0}}{2(\|\mathbf{q}^{(i)}_u[n]\!-\!\mathbf{q}_{b}[n]\|^2)^{\frac{{\alpha_{bu}}}{2}+1}} \nonumber\\&\hspace{-5mm}\times({\|\mathbf{q}_u[n]\!-\!\mathbf{q}_{b}[n]\|^2}\!-\!{\|\mathbf{q}^{(i)}_u[n]\!-\!\mathbf{q}_{b}[n]\|^2}).
\end{align}
To facilitate the solution design, we also introduce a new variable given as
\begin{equation}\label{36}
u[n]\geq\frac{t^2_k[n]}{r_k[n]},~\forall k, n.
\end{equation}
Consequently, $\bar{ \Gamma}_k[n]$ in (\ref{39}) is transformed into: 
\begin{align}			
\bar{ \Gamma}_k[n]&\overset{\Delta}{=}  s_k[n]\log_{2}\bigg( 1+\frac{P_k}{\sigma^{2}}\left( \big( \psi_1[n]+\psi_2[n]\right)t^2_k[n] \nonumber\\& \hspace{7em}+\psi_3[n]t_k[n]+\psi_4\beta_{k,b}[n]\big)\bigg).\label{41}
\end{align}
\begin{algorithm}[t]
\algsetup{linenosize=\scriptsize }
\caption{Successive Convex Approximation}
\begin{algorithmic}[1]\label{algorithm1}
\renewcommand{\algorithmicrequire}{\textbf{Input:}}
\renewcommand{\algorithmicensure}{\textbf{Output:}}
\REQUIRE Set the number of iterations ${t}$, initial point $\boldsymbol{q}_u^{(1)}$, and error tolerance $\epsilon_1 \ll 1$.
\STATE \textbf{repeat}\\
\STATE \quad For given $\{ \boldsymbol{s}^{(t)}, \boldsymbol{\rho}^{(t)}\}$, solve (P7) and store the\\ \quad intermediate solutions $\{{\boldsymbol{q}^{(t)}_u},\boldsymbol{v}^{(t)}_u\}$.\\
\STATE \quad Set ${t}={t}+1$;
\STATE   \textbf{until} $|\chi^{(t)}-\chi^{(t-1)}|\leq \epsilon_1$.
\STATE \textbf{Return} $\{\boldsymbol{q}^*_u,\boldsymbol{v}^*_u\}=\{{\boldsymbol{q}^{(t)}_u},\boldsymbol{v}^{(t)}_u\}$.
\end{algorithmic}
\end{algorithm}{Since ${\bar{\Gamma}}_k[n]$ is a differentiable  function with respect to $t_k[n]$, (\ref{41}) can be lower bounded as follows:}
\begin{align}\label{399}
{\bar{\Gamma}}_k[n]&\geq {\bar{\Gamma}}_k[n]\bigg|_{t^{(i\!-\!1)}_k[n]}\!\!\!\!+\!\left( \!\partial _{t_k[n]}{\bar{\Gamma}}_k[n]\bigg|_{t^{(i\!-\!1)}_k[n]}\right) (t_k[n]\!-\!t^{(i\!-\!1)}_k[n])\nonumber\\&\overset{\Delta}{=}{{\Psi}}_k[n].
\end{align}
Finally, the UAV trajectory/velocity optimization problem can be expressed as 
\begin{align}\label{P7}
\text{(P7)}:~& \underset{\substack{ u[n],r_k[n],t_k[n]\\\boldsymbol{q}_u,\boldsymbol{v}_u,\chi }} {\text{maximize}}\:\:\chi\nonumber \\
&\text{s.t.}~\bar{C}_{2},~C_{8}-C_{12},~(\text{\ref{36}})\nonumber\\
&\quad~C_{3}:~\frac{1}{R_{\min,k}}\sum_{n=1}^{N}{{\Psi}}_k[n] \geq \chi,~\forall k.
\end{align}
As a result, (P7) is a convex optimization problem and provides a lower bound for (P5). In particular, (P7) can be solved at iteration $i$ by employing a convex optimization solver, e.g., CVX, which is summarized in \textbf{Algorithm \ref{algorithm1}}. 
{\begin{remark}\label{reamrk4}
The feasible solution to (P7) is always feasible to (P4) \cite{Zhan2018}.  On the other hand, let us denote  $F(\boldsymbol{s}^{(i)} , \boldsymbol{q}^{(i)}_u, \boldsymbol{\rho}^{(i)} )=\frac{1}{R_{\min,k}}\sum_{n=1}^{N}{{\Psi}}_k[n]$ at iteration $i$, it then follows that
\begin{equation}
1 \overset{\text{(a)}} \leq \underset{k} {\text{min}}\:\: \frac{1}{R_{\min,k}} F(\boldsymbol{s}^{(i)}, \boldsymbol{q}_u^{(i)}, \boldsymbol{\rho}^{(i)} )\overset{\text{(b)}} = \chi(\boldsymbol{s}^{(i)}, \boldsymbol{v}_u^{(i)},\boldsymbol{q}^{(i)}_u, \boldsymbol{\rho}^{(i)} ),
\end{equation}
where (a) holds since  $\{\boldsymbol{s}^{(i)} ,\kappa^{(i)} \}$  is a solution set of (P2) with given $\{\boldsymbol{q}^{(i)}_u,\boldsymbol{v}^{(i)}_u,\boldsymbol{\rho}^{(i)}\}$, thus it must satisfy $C_{3}$ in \eqref{M_U}. Besides, (b) is due to the definition of problem (P4). 
\end{remark}}
\begin{proposition}
The approximation (\ref{399}) produces a tight lower bound of $\Gamma_k[n]$, leading to a sequence of improved solutions for (P4).
\end{proposition}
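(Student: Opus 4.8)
The plan is to read Proposition 1 as the minorize--maximize (MM) guarantee that underlies the SCA step: I must show (i) that the affine function ${\Psi}_k[n]$ in (\ref{399}) under-estimates $\bar{\Gamma}_k[n]$ and coincides with it to first order at the expansion point, and (ii) that maximizing the resulting surrogate in (P7) drives the true objective of (P4) monotonically upward. First I would assemble the chain of lower bounds that links ${\Psi}_k[n]$ back to the original rate term $\Gamma_k[n]$ in (\ref{29}). The slack substitutions impose $u[n]\le\beta_{bu}[n]$ and $r_k[n]\le\beta_{k,u}[n]$ through $C_{12}$, while (\ref{36}) imposes $t_k^2[n]\le u[n]r_k[n]$; since the log-argument is increasing in both $u[n]r_k[n]$ and $t_k[n]\ge 0$, these give ${\Psi}_k[n]\le\bar{\Gamma}_k[n]\le\Gamma_k[n]$ once the final Taylor step is shown to be a genuine under-estimator. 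Each inequality is active at the optimum, because the objective is monotone in the slack variables, so the corresponding constraints bind and the bound is exact there.

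The crux is this last link, namely that the first-order expansion (\ref{399}) of $\bar{\Gamma}_k[n]$ in $t_k[n]$ lies below the function itself. I would obtain it from the first-order condition for convexity, $f(t)\ge f(a)+f'(a)(t-a)$, applied to $f(\cdot)=\bar{\Gamma}_k[n]$ regarded as a single-variable function of $t_k[n]$ with all other quantities frozen. Completing the square in (\ref{41}) writes the log-argument as $1+\frac{P_k}{\sigma^2}\big(\psi_2 t_k^2[n]+(\sqrt{\psi_4\beta_{k,b}}+\sqrt{\psi_1}\,t_k[n])^2\big)$, a convex quadratic in $t_k[n]$, and I would then check the sign of $\partial^2_{t_k}\bar{\Gamma}_k[n]$ to confirm the minorization over the operating region. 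This convexity verification is the step I expect to be the main obstacle, because the logarithm of a quadratic is not convex in general, so the argument must exploit the specific perfect-square, high-SNR structure rather than appeal to a generic composition rule. Tightness is then immediate: at $t_k[n]=t_k^{(i-1)}[n]$ both the value and the derivative of ${\Psi}_k[n]$ match those of $\bar{\Gamma}_k[n]$ by construction, so the bound is exact at the expansion point.

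Finally I would turn minorization-plus-tightness into the stated ``sequence of improved solutions.'' Write $\chi_{\mathrm{tr}}$ for the true (P4) objective $\min_k R_{\min,k}^{-1}\sum_n\bar{\Gamma}_k[n]$ and $\chi^{(i)}_{\mathrm{sur}}$ for the surrogate $\min_k R_{\min,k}^{-1}\sum_n{\Psi}_k[n]$ built around the iterate $\boldsymbol{q}_u^{(i-1)}$. Letting $\boldsymbol{q}_u^{(i)}$ solve (P7), the two properties yield
\begin{equation}
\chi_{\mathrm{tr}}(\boldsymbol{q}_u^{(i)})\ge\chi^{(i)}_{\mathrm{sur}}(\boldsymbol{q}_u^{(i)})\ge\chi^{(i)}_{\mathrm{sur}}(\boldsymbol{q}_u^{(i-1)})=\chi_{\mathrm{tr}}(\boldsymbol{q}_u^{(i-1)}),
\end{equation}
where the first inequality is the minorization, the second is optimality of $\boldsymbol{q}_u^{(i)}$ for the surrogate, and the equality is tightness. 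Combined with Remark~\ref{reamrk4}, which states that every (P7)-feasible point is (P4)-feasible, this shows that each iteration stays feasible and never decreases the objective. Because the average rate is bounded above by the finite transmit powers and path losses, the monotone sequence $\{\chi_{\mathrm{tr}}(\boldsymbol{q}_u^{(i)})\}$ converges, which is precisely the improvement asserted in the proposition.
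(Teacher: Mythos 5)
Your MM skeleton is the right frame, and your final paragraph (the chain $\chi_{\mathrm{tr}}(\boldsymbol{q}_u^{(i)})\ge\chi^{(i)}_{\mathrm{sur}}(\boldsymbol{q}_u^{(i)})\ge\chi^{(i)}_{\mathrm{sur}}(\boldsymbol{q}_u^{(i-1)})=\chi_{\mathrm{tr}}(\boldsymbol{q}_u^{(i-1)})$, plus feasibility via Remark~\ref{reamrk4} and boundedness) is actually cleaner than the paper's own wrap-up. But there is a genuine gap exactly at the step you defer: you never establish the minorization $\Psi_k[n]\le\bar{\Gamma}_k[n]$, and the second-derivative check you propose does not come out in your favor. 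Write the log-argument of (\ref{41}) as $G(t)=1+\gamma\left(at^2+bt+c\right)$ with $\gamma=P_k/\sigma^2$, $a=\psi_1[n]+\psi_2[n]\ge0$, $b=\psi_3[n]\ge0$, $c=\psi_4\beta_{k,b}[n]\ge0$. The numerator of $\partial^2_t\ln G(t)$ is $2a\gamma+\gamma^2\left(2ac-b^2\right)-\gamma^2\left(2abt+2a^2t^2\right)$, which is positive at $t=0$ whenever $2a(\sigma^2/P_k+c)>b^2$ yet strictly negative for all large $t$. So $\bar{\Gamma}_k[n]$ is convex-then-concave (sigmoidal) in $t_k[n]$, and a tangent taken at an expansion point in the concave regime is \emph{not} a global under-estimator: for instance, the tangent of $\log(1+t^2)$ at $t_0=2$ already exceeds the function at $t=0$. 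Your completing-the-square observation cannot repair this, since it still leaves a logarithm of a convex quadratic with the same sign change; and restricting attention to "the operating region" is precisely the unproven claim, because nothing in (P7) forces the iterates $t_k^{(i-1)}[n]$ or the feasible $t_k[n]$ into the convex regime.

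Two further remarks. First, the standard repair that makes your argument go through verbatim: do not linearize the whole logarithm, linearize only the inner convex quadratic, using $t_k^2[n]\ge 2t_k^{(i-1)}[n]\,t_k[n]-\bigl(t_k^{(i-1)}[n]\bigr)^2$, and define $\Psi_k[n]=s_k[n]\log_2\bigl(1+\gamma\bigl(a\bigl(2t_k^{(i-1)}[n]t_k[n]-(t_k^{(i-1)}[n])^2\bigr)+bt_k[n]+c\bigr)\bigr)$. Since $\log_2(1+\gamma(\cdot))$ is increasing and concave and its argument is now affine in $t_k[n]$, this surrogate is concave (so $C_3$ in (P7) stays convex), globally under-estimates $\bar{\Gamma}_k[n]$, and is tight with matching gradient at the expansion point; your improvement chain then holds as written. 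Second, for comparison: the paper's own proof is no sounder at this point — it asserts that $\bar{\Gamma}_k[n]$ is \emph{concave} and that the tangent is a super-gradient satisfying $\bar{\Gamma}_k[n]\le\Psi_k[n]$, which is the reverse of the inequality claimed in (\ref{399}); with that direction, (P7) would be an outer rather than inner approximation of (P4), and the feasibility-preservation invoked in Remark~\ref{reamrk4} would collapse. So you have correctly isolated the difficulty that the paper glosses over, but as written neither your proposal nor the paper's proof actually establishes the proposition.
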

\begin{proof}
See Appendix \ref{sec:Proposition1}.
\end{proof}

\subsection{Optimizing IRS Reflection Coefficient:}	
For any given user scheduling and UAV trajectory/velocity, we  maximize the weighted minimum of the harvested energy, where the weight is inversely proportional to $e_k=\left( {\nu_k} - \frac{1}{{{c_k}}}\ln (\frac{{{M_k} - E_\text{min}}}{{E_\text{min}}})\right) $. Thus, the reflection coefficient optimization problem can be written as
\begin{align}\label{rho_2}
\text{(P8)}:~& \underset{\boldsymbol{\rho},\Pi} {\text{maximize}}\:\Pi \nonumber \\
&\text{s.t.}~C_{3},~C_{7},\nonumber\\
&\quad~C_{2}:~\frac{\sum_{n=1}^{N}\mathbb{E}\{E^h_{u,k}[n]\}}{e_k}\geq \Pi,~\forall k,
\end{align} 
Upon rearranging terms, problem (P8) can be rewritten as
\begin{align}\label{rho3}
\text{(P9)}:~& \underset{\boldsymbol{\rho},\Pi} {\text{maximize}}\:\Pi \nonumber \\
&\text{s.t.}~C_{3},~C_{7},\nonumber\\
&\quad~C_{2}:~\frac{\sum_{n=1}^{N}s_k[n]P_k(1-\bar{\rho}^2[n])M_xM_y\beta_{bu}[n]}{e_k}\geq \Pi.
\end{align} 
However, (P9) is non-convex with respect to $\boldsymbol{\rho}$. To handle non-convexity, we resort to the SCA technique to approximate constraint $C_3$, which leads to the following lower bounded:
\begin{align}
{{\bar{\Gamma}}}_k[n]\!&\geq\! {{\bar{\Gamma}}}_k[n]\big|_{\rho^{(i\!-\!1)}_k[n]}\!\!+\!\!\left(\!\partial_{\rho_k[n]}{{\bar{\Gamma}}}_k[n]\big|_{\rho^{(i\!-\!1)}_k[n]}\!\right) \!(\!\rho_k[n]\!-\!\rho^{(i\!-\!1)}_k[n]).
\end{align}
It is worth mentioning that similar to Remark \ref{reamrk4}, it can be proven that the feasible solution to (P9) is always feasible to (P8). In this way, (P9) becomes a convex optimization problem that can be efficiently solved by CVX \cite{Boyd}. The SCA algorithm for solving (P9) is similar to \textbf{Algorithm \ref{algorithm1}} and is omitted here for brevity. Similar to Proposition 1, the SCA algorithm for solving (P9) monotonically increases the objective function of (P9) at each iteration and finally converges. The whole steps for solving (P1) are provided in \textbf{Algorithm \ref{algorithm2}}. In particular, the resulting objective values of (P7) and (P9) are non-decreasing over the iterations, which ensures the convergence of \textbf{Algorithm \ref{algorithm2}}.
\begin{algorithm}[t]
\caption{Overall Algorithm for Solving Problem (P1)}
\begin{algorithmic}[1]\label{algorithm2}
	\renewcommand{\algorithmicrequire}{\textbf{Input:}}
	\renewcommand{\algorithmicensure}{\textbf{Output:}}
	\REQUIRE Set the number of iterations ${i}$ and error tolerance $\epsilon_2 \ll 1$.
	\STATE \textbf{repeat}\\
	\STATE \quad Solve problem (P4) to obtain the user scheduling $\boldsymbol{s}^{(i)}$. 
	\STATE \quad  Use \textbf{Algorithm \ref{algorithm1}} to optimize the UAV \\ \quad trajectory/velocity and obtain $\mathbf{q}^{(i)}_u$ and  $\mathbf{v}^{(i)}_u$ .
	\STATE \quad Obtain the optimal IRS reflection coefficient, $\boldsymbol{\rho}^{(i)}$\\ \quad according to (P9).
	\STATE \quad Set $i=i\!+\!1$;
	\STATE   \textbf{until}  $|\kappa^{(i)}-\kappa^{(i-1)}|\leq \epsilon_2$.
	\STATE \textbf{Return} $\{\boldsymbol{\rho}^*,\boldsymbol{v}_u^*,\boldsymbol{s}^*,\boldsymbol{q}_u^*\}=\{\boldsymbol{\rho}^{(i)},\boldsymbol{v}_u^{(i)},\boldsymbol{s}^{(i)},{\boldsymbol{q}^{(i)}_u}\}$.
\end{algorithmic}
\end{algorithm}

\begin{proposition}
In steps of Algorithm 2, the objective values of (P1) are monotonically increasing after each iteration of Algorithm 1. However, these values are non-negative.  Thus, the proposed AO algorithm is guaranteed to converge.
\end{proposition}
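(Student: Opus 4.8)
The plan is to invoke the classical monotone-convergence argument for block coordinate (alternating) minimization, adapted to the min-max structure of (P1). I would organize the argument around two facts: (i) each of the three block updates in Algorithm 2 improves, or at least does not degrade, its associated objective, and (ii) the master objective is bounded, so the resulting monotone sequence must converge.

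First I would verify per-block monotonicity. For the scheduling block, the relaxed problem (P3) is an LP that CVX solves to global optimality; since the incumbent $\boldsymbol{s}^{(i-1)}$ stays feasible under the fixed $(\boldsymbol{q}_u^{(i-1)},\boldsymbol{v}_u^{(i-1)},\boldsymbol{\rho}^{(i-1)})$, the new optimizer cannot increase $\kappa$. For the trajectory/velocity block, Algorithm 1 solves the SCA surrogate (P7); by Proposition 1 the surrogate ${{\Psi}}_k[n]$ is a tight lower bound that coincides with $\Gamma_k[n]$ at the expansion point, so the inner SCA iterates generate a non-decreasing sequence of $\chi$ that remains feasible for (P4), which is precisely the chain established in Remark \ref{reamrk4}. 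For the reflection-coefficient block, the identical SCA mechanism applied to (P9) yields a non-decreasing sequence of the harvested-energy margin $\Pi$. Thus every inner loop produces a monotonically non-decreasing surrogate objective.

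Next I would translate these block improvements into monotonicity of the outer loop. The key observation is that raising $\chi$ loosens the rate constraint $C_3$ and raising $\Pi$ loosens the energy constraint $C_2$; neither can shrink the feasible set of the scheduling LP at the following outer iteration. Consequently, when the scheduling block is re-solved, the minimal attainable $\kappa$ cannot grow, so $\{\kappa^{(i)}\}$ is monotonically non-increasing over the iterations $i$ of Algorithm 2. Finally, because $\kappa = \max_k \sum_{n} s_k[n] E_k$ with $E_k = \delta_T P_k \geq 0$ and $s_k[n]\in[0,1]$, the objective is non-negative and hence bounded below by zero. A monotone sequence that is bounded converges, which establishes the claimed convergence of the AO algorithm.

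The main obstacle I anticipate is that the three subproblems optimize three different surrogates ($\kappa$, $\chi$, and $\Pi$) rather than a single common objective, so the delicate step is the second one above: arguing rigorously that improving the rate and energy margins never forces $\kappa$ to increase. I would make this airtight by showing that the trajectory and reflection updates only relax $C_2$ and $C_3$ in (P1), so the feasible region of the scheduling LP is monotonically enlarged and its optimal value is therefore non-increasing; combined with the boundedness below, this yields convergence.
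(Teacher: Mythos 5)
Your proof is correct, and it reaches the paper's conclusion by a genuinely different (and in places more careful) route. The paper's own proof defines a single merit function $f(\boldsymbol{s},\boldsymbol{q}_u,\boldsymbol{v}_u,\boldsymbol{\rho})$ equal to the objective value of (P1) and chains three inequalities, one per block update, asserting that the scheduling, trajectory/velocity, and reflection-coefficient steps each improve $f$; monotonicity plus boundedness then gives convergence. You instead confront head-on the fact that the three blocks optimize three different quantities ($\kappa$, $\chi$, $\Pi$) and couple them through constraint margins: since $\kappa=\max_k\sum_n s_k[n]E_k$ depends only on $\boldsymbol{s}$, the trajectory and reflection updates leave the (P1) objective untouched, and their only role is to keep the incumbent schedule feasible for $C_2$ and $C_3$ (via $\chi\geq 1$ and $\Pi \geq 1$, the content of Remark \ref{reamrk4}), so that re-solving the scheduling LP can only decrease $\kappa$. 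This is arguably more faithful to the algorithm than the paper's argument, in which the step-3 and step-4 inequalities actually hold trivially with equality. You also state the monotonicity in the correct direction for a minimization problem --- $\{\kappa^{(i)}\}$ non-increasing and bounded below by zero --- whereas the proposition's literal phrasing (``monotonically increasing \ldots non-negative \ldots thus converges'') is logically incomplete, since an increasing sequence bounded below need not converge. One refinement you should make: your claim that the feasible region of the scheduling LP is ``monotonically enlarged'' is stronger than what the updates guarantee, because maximizing the minimum margin at the incumbent schedule does not ensure better rates or harvested energy under every other candidate schedule; what your argument actually needs, and what does hold, is only that the incumbent $\boldsymbol{s}^{(i)}$ remains feasible after the trajectory and reflection updates, which suffices for $\kappa^{(i+1)}\leq\kappa^{(i)}$.
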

\begin{proof}
See Appendix \ref{sec:Proposition2}.
\end{proof}

\section{Complexity Analysis}\label{Complexity}
In this section, we investigate the complexity of Algorithm \ref{algorithm2} which consists of $3$ main steps and can be solved by the interior-point method \cite{Boyd}. Specifically, in step $2$, the complexity order for solving (P4) is $\mathcal{O}(KN + N +3K+ 1)^{3.5}$, where $KN + N +3K+ 1$ indicates the number of variables. In step $3$, the complexity of computing the UAV trajectory/velocity is $\mathcal{O}\left( I_1(2K + 3N + 1)^{3.5}\right)$, where $2K + 3N + 1$ denotes the number of variables and $I_1$ is the number of iterations. Finally, in step $4$, the reflection coefficient with the complexity of $\mathcal{O}(2K+N+1)^{3.5}$ is obtained, where $2K+N+1$ stands for the number of variables. Accordingly, the total complexity order of Algorithm \ref{algorithm2} is $\mathcal{O}( I_2( (KN + N +3K+ 1)^{3.5}+I_1(2K + 3N + 1)^{3.5}+(2K+N+1)^{3.5}))$, where $I_2$ is the number of iterations expected for convergence \cite{Boyd}. The complexity analysis provides several insights into the network design. Although the LoS links provided by UAV-carried IRS facilitate communication, the complexity of Algorithm 2 grows as the number of reflecting elements at the IRS increases. In other words, scaling up the IRS can improve EH efficiency and reduce the energy consumption of users, but this comes at the cost of higher complexity. Thus, there is a trade-off between complexity and performance gain.
\begin{table}[t]
\renewcommand{\arraystretch}{1.05}
\centering
\caption{ Simulation Parameters}
\label{table-notations}
\begin{tabular}{|l|l| }    
\hline
\textbf{Parameters}& \textbf{Values}\\\hline      
Maximum flying acceleration of the UAV, $a_\text{max}$& $4$~$\text{m}/\text{s}^2$  \\ \hline  
Maximum flying speed of the UAV, $V_\text{max}$& $20$~m/s  \\ \hline  
Maximum altitudes of the UAV, $q^z_u[n]$ & $20$~m  \\ \hline
Received antenna noise power,~$\sigma^2$ & $-140~\text{dBm/Hz}$  \\ \hline
Number of reflecting elements,~$M$ & $25$\\ \hline 
Number of discrete time slots, $\delta_T$  & $0.5$~s\\ \hline
Non-linear EH model parameters \cite{Nonlinear}&
\begin{tabular}[c]{@{}l@{}} $c_k=6400$, \\ 	$\nu_k=0.003$\end{tabular}  \\ \hline
Maximum harvested power, $M_k$ & $0.02$~Watt\\ \hline
Maximum transmit power, $P_{k}$ \cite{Jae_Cheol} &$0.1$~Watt\\ \hline
Target data rate, $R_\text{min}$ & $10$ Mbits\\ \hline
Rician factor,~$K_r$ & $10$~dB \\ \hline
Carrier frequency &  $3.4$~GHz  \\ \hline
Network bandwidth  &  $1$~MHz  \\ \hline
\end{tabular}
\end{table}

\section{Numerical Results}\label{sec5}
This section provides numerical results to evaluate the performance of our proposed scheme.~A 3D coordinate network is considered. The total number of reflecting elements at the IRS is assumed to be $M=M_xM_y$. To get a better insight into the network model, we consider two different trajectories for the IRS-UAV namely trajectory $1$ and trajectory $2$. The initial and final positions of trajectory $1$ is $(0,\:30,\:10)$~m and $(300,\:65,\:10)$~m, receptively, while for the second trajectory, it is $(0,\:45,\:10)$~m and $(300,\:45,\:10)$~m, respectively. For simplicity, we consider three users ($K = 3$) \cite{Kwan}. The location of the BS is ($150$, $50$, $8$)~m and the location of users are set as follows: $\text{u}_1=(20,\:50,\:1)$~m, $\text{u}_2=(120,\:40,\:1)$~m, and $\text{u}_3=(240,\:55,\:1)$~m. Table I gives the simulation parameters,~unless otherwise specified. In addition, we consider the maximum altitude of the UAV equal to $20$ m. This assumption is standard throughout the literature (e.g., see \cite{Huang,Mo_Kang,Wei_Wang,DSixing_Yin,Cheol_Jeong,Pan1}) and fairly realistic. In this way, the IRS-UAV flies at the lowest allowable flight altitude to obtain a higher channel gain for maximizing the harvested energy at the IRS and reflect the incident signals simultaneously. The distance-dependent path loss model is given by $
L(d) = C_0(\frac{d}{D_0})^{-\alpha}$, where $C_0=\left(\frac{3\times 10^8}{4\mathsf{f}\pi} \right)^2$ indicates the path loss at the reference distance  $D_0=1$ m \cite{Xiao2020},~$d$ is the link distance ,~and $\alpha$ denotes the path loss exponent.~More specifically,~the path loss exponents of links are assumed to be $\alpha=2.4$ \cite{Pan2}. We also  study the following comparative  baselines: 
\begin{enumerate}
\item [i)]	  Proposed scheme (Algorithm 2): Optimizing the user scheduling, UAV trajectory/velocity, and IRS reflection coefficients.  

\item[ii)]	  Baseline  1: Proposed scheme with a fixed velocity. 

\item[iii)] Baseline  2: Proposed scheme with a fixed velocity and straight trajectory where the UAV flies in a straight line from $ {{\bf{q}}_s}$ to ${{\bf{q}}_e}$.

\item[iv)] Baseline  3: Proposed scheme without IRS (No IRS).

\item[vi)] Baseline  4:  Proposed scheme without considering EH at the IRS (No EH), i.e., $\bar{\rho}[n]=1,~\forall n$. 
\end{enumerate}
\begin{figure}[t]
\centering
\includegraphics[width=3in] {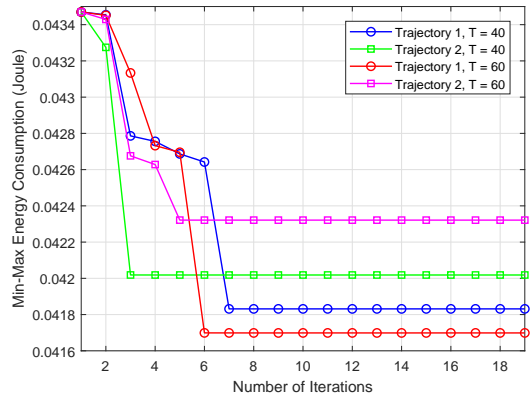}
\caption{ Convergence behavior of the proposed scheme under different network setups.}
\label{convergence}
\vspace{-3mm}
\end{figure}

\begin{figure}
\centering
\begin{minipage}[b]{.35\textwidth}
\includegraphics[width=3in]{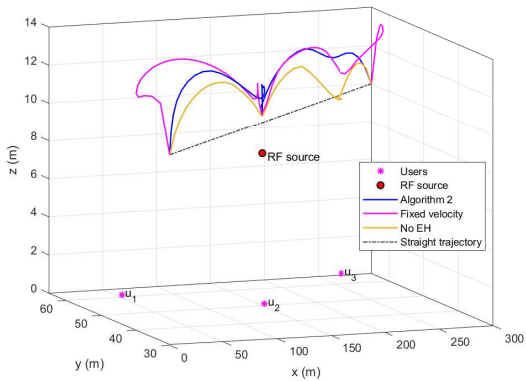}
\subcaption{Trajectory $1$ in the 3D plane.}	\label{tr1}
\end{minipage}\qquad
\begin{minipage}[b]{.35\textwidth}
\includegraphics[width=3in]{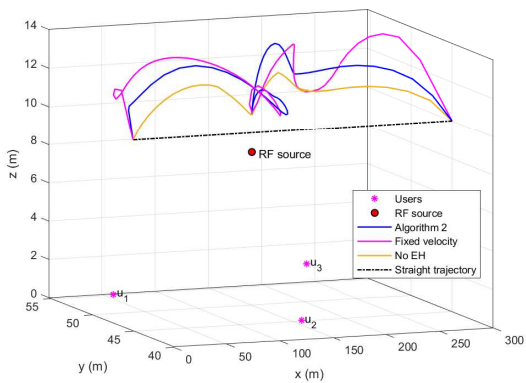}
\subcaption{Trajectory $2$ in the 3D plane.}	\label{tr2}
\end{minipage}
\begin{minipage}[b]{.35\textwidth}
\includegraphics[width=3in]{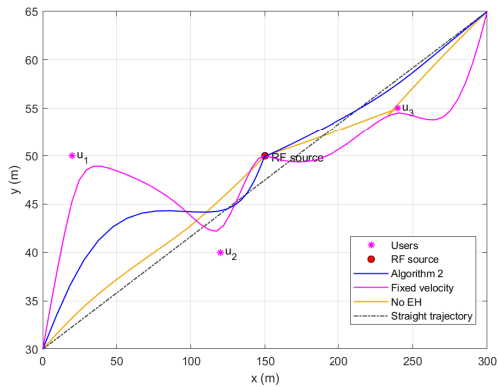}
\subcaption{Trajectory $1$ in the 2D plane.}	\label{tr11}
\end{minipage}\qquad
\begin{minipage}[b]{.35\textwidth}
\includegraphics[width=3in]{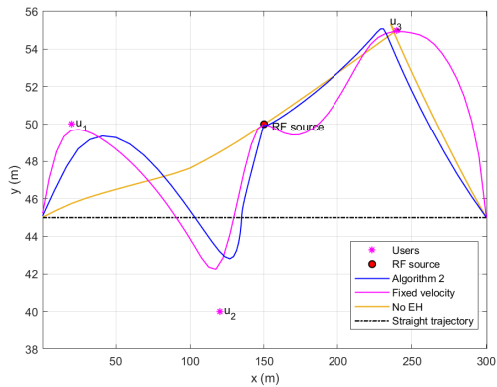}
\subcaption{Trajectory $2$ in the 2D plane.}	\label{tr22}
\end{minipage}
\caption{Trajectory behavior of the IRS-UAV under different initial and final positions for different baseline schemes.}
\label{tr}
\end{figure}

\begin{figure*}
\centering
\begin{minipage}[b]{.4\textwidth}
\centering
\includegraphics[width=3in]{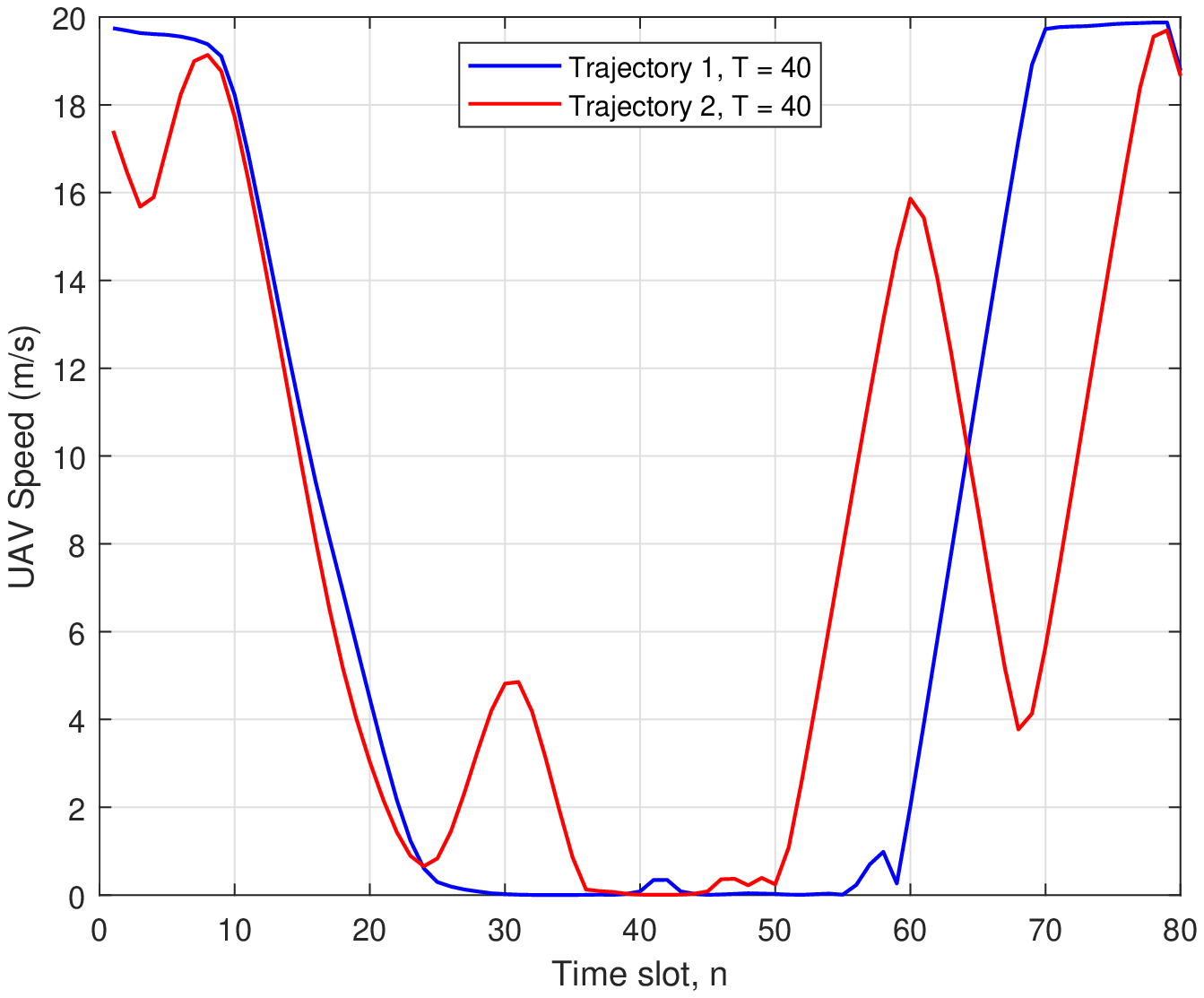}
\subcaption{UAV velocity under $T = 40$.}\label{vel1}
\end{minipage}\qquad
\begin{minipage}[b]{.4\textwidth}
\centering
\includegraphics[width=3in]{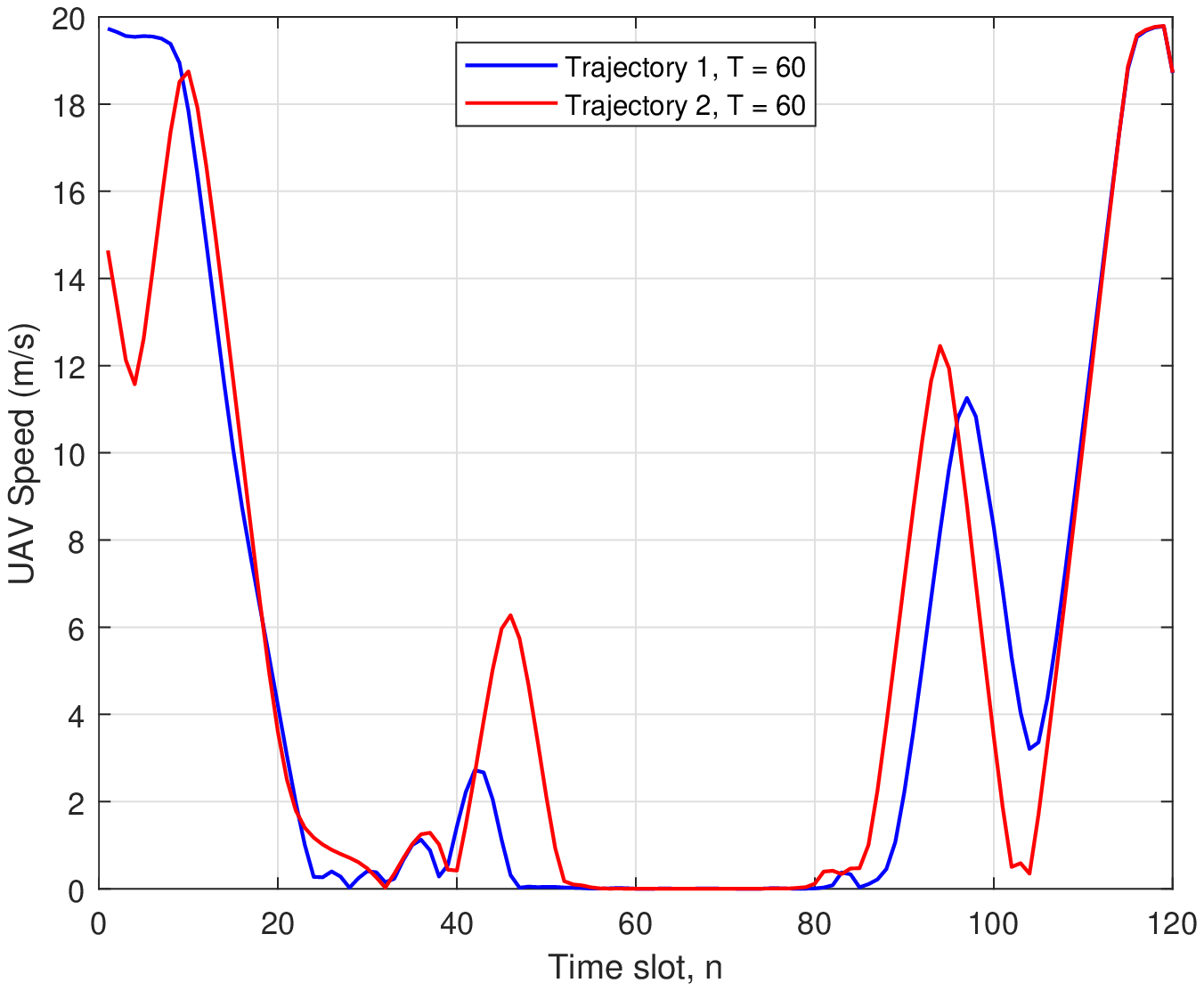}
\subcaption{UAV velocity under $T = 60$.}\label{vel2}
\end{minipage}
\caption{Velocity behavior of the IRS-UAV under different trajectories and total time slots.}
\label{vel}
\end{figure*}
\subsection{Convergence Behavior}
We study the convergence of the proposed scheme in Fig. \ref{convergence},  which shows the min-max energy consumption as a function of the number of iterations.  The plots include different network setups to highlight the convergence behavior of the proposed scheme after optimizing the user scheduling, UAV trajectory/velocity, and IRS reflection coefficient. As observed, the scheme achieves the bulk of its performance gain in just a few iterations (about seven). Thus, from the implementation point of view,  so few iterations heighten the appeal of the proposed scheme. In addition, it converges faster for trajectory $2$  than for trajectory $1$. This serves to illustrate that the  UAV trajectory significantly impacts the convergence behavior.

\subsection{Trajectory and Velocity Behavior of the IRS-UAV}
Fig. \ref{tr} and \ref{vel} demonstrate the trajectory and velocity behavior of the IRS-UAV under different trajectories for different baseline schemes and total time slots in the 3D and 2D planes\footnote{Specifically, the IRS elements are typically sub-wavelength in size (e.g., a square patch of size $\lambda/10\times \lambda/10$ to $\lambda/5\times \lambda/5$ ) to behave as scatterers without strong intrinsic directivity \cite{Henk_Wymeersch,Christos_Liaskos}. For instance, an  IRS of size $ 9 \times 9$ with a square patch of size $\lambda/5\times \lambda/5$  has a side length of approximately $ 50$ cm. A metasurface  possesses an overall areal weight density of $0.11$ g/$\text{cm}^2$ \cite{Silva}. Thus, for an IRS of size $ 9 \times 9$, the weight is roughly $0.37$ kg \cite{Torres}. For a typical UAV (model md4-1000), the maximum payload mass is $1.25$ Kg. Due to the compact size, lightweight, low energy consumption, and conformal geometry of the IRS, the UAV can readily carry it.}. In these figures, the IRS-UAV flies near the users by adjusting its trajectory with fixed or optimized velocity for all proposed schemes. For the proposed scheme where the IRS-UAV serves as a reflector, the IRS-UAV flies close to the BS to increase the performance gain in terms of the min-max energy consumption of all users. This is because flying near the BS can establish better channel conditions for the IRS-UAV, leading to better performance gains. Besides, by modifying the IRS phase shifts to align the cascaded angle of arrival (AoA) and AoD with the user-BS link,  we can reduce the maximum energy consumption of all users. The UAV adjusts its trajectory to move closer to the BS for baseline 4 where no EH is performed at the IRS. This is interesting since the number of users in the 6G is proliferating, but each may not have an LoS link with the BS. However, thanks to IRS-UAV, LoS links can be established between each user and the BS via adjusting the UAV trajectory/velocity and the IRS phase shifts.

With existing LoS links, the IRS-UAV can still help each user reduce the transmit power,  leading to improvements in the SNR at the BS. Subsequently, in Fig. \ref{vel}(a) and \ref{vel}(b), we show the velocity of the IRS-UAV under different total time slots, $T$, for trajectory $1$ and trajectory $2$. In both figures, the IRS-UAV reduces its speed by moving sufficiently close to each user and then increasing it to serve the subsequent user. In the end, it approaches the final location with a sudden increase in speed.  We observe that by increasing the total operational time, $T$, the IRS-UAV has a sufficiently large flying time to move its preferred position to serve the user near the BS. Another interesting result is that with $T=60$, the IRS-UAV flies at a lower speed to enhance the received signal strength. However, with $T=40$ the IRS-UAV requires to increase its speed to serve all users.

\begin{figure}[t]
\centering
\includegraphics[width=3in]{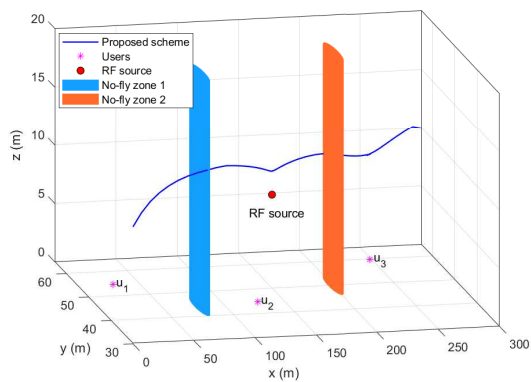}
\caption{ Behavior of trajectory $1$ in the 3D plane under the existence of No-fly zones.}
\label{No_fly_zone_new}
\vspace{-3mm}
\end{figure}	
\begin{figure*}
\centering
\begin{minipage}[h]{.4\textwidth}
\centering
\includegraphics[width=3in]{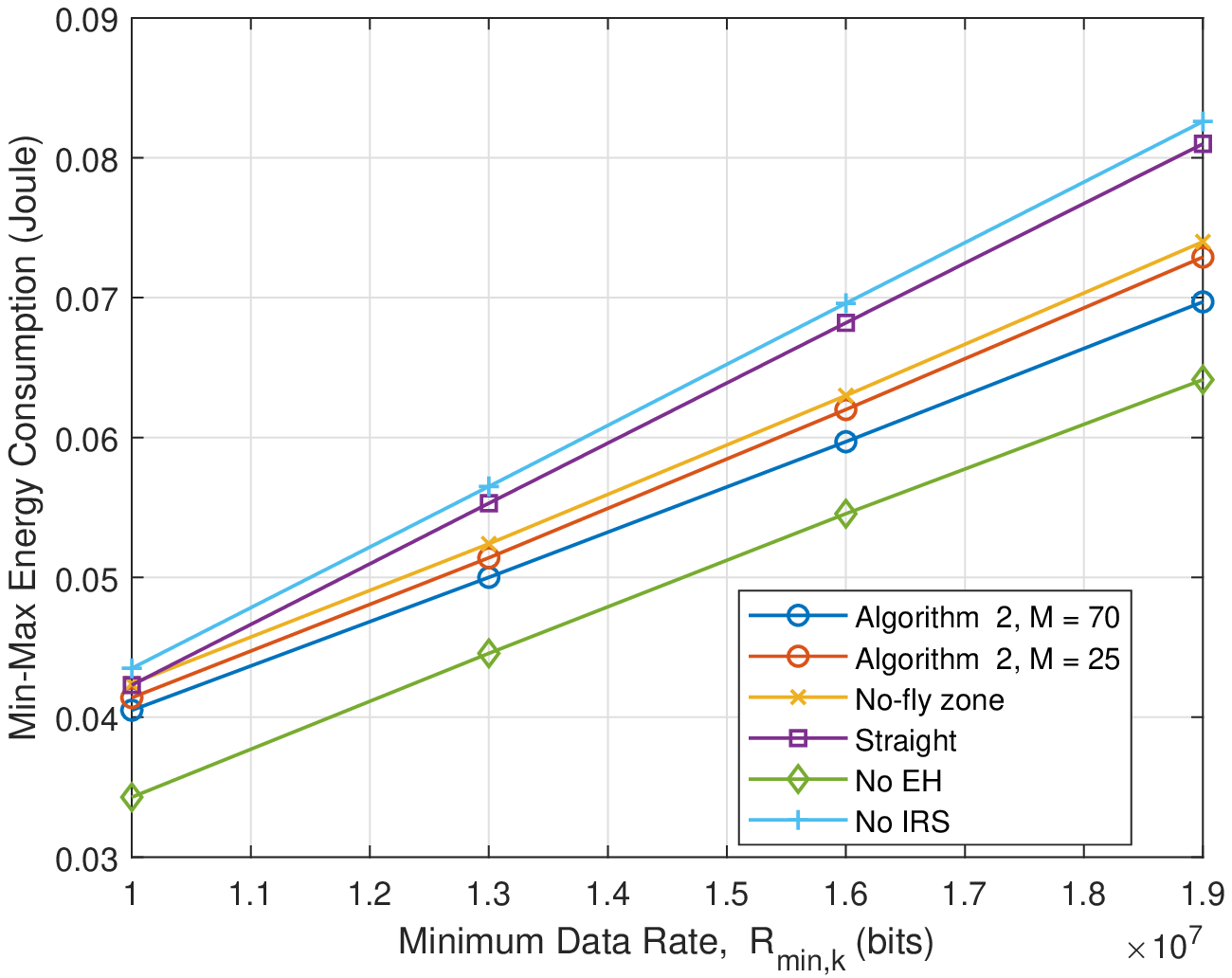}
\caption{ Min-max energy consumption vs. minimum data rate under trajectory 1 and $T = 60$.}
\label{datarate}
\end{minipage}\qquad\qquad
\begin{minipage}[h]{.4\textwidth}
\centering
\includegraphics[width=3in]{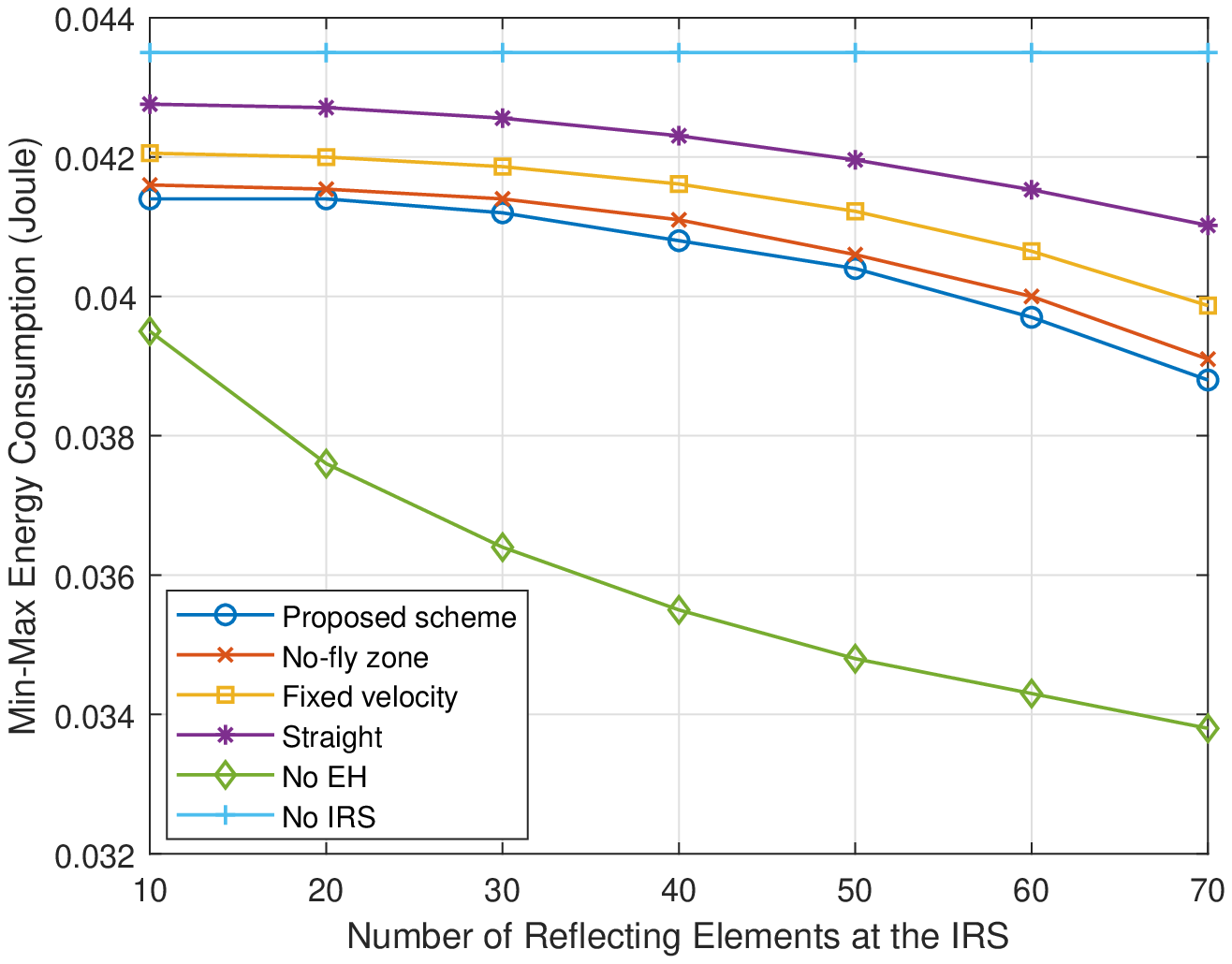}
\caption{ Min-max energy consumption vs. the number of IRS reflecting elements under trajectory 1 and $T = 60$.}
\label{ref}
\end{minipage} 
\vspace{-3mm}
\end{figure*}
On the other hand, in practice, there may be areas (No-fly zones) such that UAVs are strictly prohibited to fly over them. Therefore, these areas need to be taken into consideration for UAV trajectory planning. Accordingly, we consider two non-overlapped No-fly zones in which the trajectory of the UAV needs to satisfy the following constraints:
\begin{align}
&\quad~C_{13}:~{\|\mathbf{q}_u[n]-\mathbf{q}_{\text{NF},1}[n]\|^2}\geq (Q_{\text{NF},1})^2,~\forall n,\nonumber\\
&\quad~C_{14}:~{\|\mathbf{q}_u[n]-\mathbf{q}_{\text{NF},2}[n]\|^2}\geq (Q_{\text{NF},2})^2,~\forall n,\nonumber
\end{align}
where $\mathbf{q}_{\text{NF},1}[n]=(80,\:44,\:20)$~m and $\mathbf{q}_{\text{NF},1}[n]=(200,\:50,\:20)$~m are the location of the No-fly zones. Also, $Q_{\text{NF},1}=4$~m and $Q_{\text{NF},1}=4$~m are the radius of the No-fly zones. As these constraints are non-convex, we resort to the first-order Taylor expansion to obtain the lower bounds similar to (\ref{key}). In Fig \ref{No_fly_zone_new}, one can observe the IRS-UAV adjusts its trajectory in such a way as to pass these two No-fly zones.

\subsection{Min-Max Energy Consumption vs. Minimum Data Rate}

Fig. \ref{datarate} investigates the min-max energy consumption of all users versus the minimum data rate requirement of each user for different schemes.  We see that by increasing the minimum data rate, the min-max energy consumption grows significantly. The reason is that for large minimum data rates, each user requires higher transmit power,  leading to high energy consumption and accordingly deteriorating the network performance.

Besides, IRS without considering energy harvesting has a better performance compared to the proposed scheme since no additional power is needed to empower the IRS reflecting elements. Finally, the proposed scheme with a straight trajectory has worse performance compared to other cases except the No IRS case. This is because without optimizing trajectory, the IRS-UAV cannot fly close or even stay above the users with better channel conditions which accordingly increase the transmit power of the user to satisfy the minimum data rate requirement. The No IRS case has the worst performance among other schemes. On the other hand, we observe almost the same result by considering No-fly zones, which indicates that our proposed scheme can preserve the performance gain even in the non-flying zones.

\subsection{Min-Max Energy Consumption vs. the number of IRS  Elements}
Fig. \ref{ref} shows the min-max energy consumption of all users versus the number of IRS reflecting elements. As they increase, the energy consumption of all users decreases monotonically. The elements also create a powerful reflective channel link, extending the communication range by adjusting the trajectory and phase shifts of the UAV and IRS, respectively. However, the proposed scheme has a limited impact on the performance gain for a few reflecting elements compared to the No IRS case, showing that the number of reflecting elements is a bottleneck. This figure also reveals an exciting result: although many reflecting elements will increase the EH requirements, the IRS-UAV can lower the users' energy consumption, highlighting the critical importance of joint optimization of the trajectory and the phase shifts. We can readily increase the number of IRS  reflecting elements at a much lower cost since these elements eschew RF chains. The proposed scheme is also superior to that of fixed velocity and straight trajectory. In addition, no  EH at the IRS option achieves better performance than our proposed scheme since the IRS in this case reflects incident signals completely without performing EH. The IRS  reflecting elements are not entirely passive, and their energy consumption cannot be neglected. Hence, we must consider the energy consumption of the IRS, especially when it is mounted on a UAV. This is because the UAV should supply the energy consumption of the IRS. To reduce energy consumption, EH is a helpful solution that is investigated in this paper. 

\section{Conclusion}\label{sec6}
This paper analyzed the user scheduling and trajectory optimization of the UAV-carried IRS network. The overall system model constitutes the following features. The IRS is mounted on the UAV. The UAV moves along a 3D trajectory.  The IRS can harvest its operational power through the uplink signals from users. It helps the uplink data transmission from users to the BS by establishing reflected channel links. The EH process is modeled as a  realistic non-linear  EH circuit.  With the aid of this system model, we minimized the maximum energy consumption of users by joint optimization of user scheduling, UAV trajectory/velocity, and IRS phase shifts/reflection coefficient. To solve this optimization problem, we first derived closed-form  IRS phase shifts and then obtained other optimization variables by resorting to the AO algorithm. The efficiency of the proposed scheme was investigated relative to the following benchmarks: fixed velocity UAV, straight trajectory, No EH at the IRS, and No IRS. Simulations demonstrate the gains of the proposed scheme over the benchmarks. For instance, with a  $50$-element IRS, min-max energy consumption can be as low as $0.0404$ J,  a $7.13$\%  improvement over the No IRS case  (achieving $0.0435$ J). We also show that IRS-UAV without EH performs best at the cost of circuit power consumption of the IRS (a $20$\% improvement over the No IRS case). To our knowledge, this is the first manuscript to design user scheduling and trajectory optimization for IRS-UAV networks. The results of this study suggest a number of new avenues for research.  First, further research is necessary for the case of users with multiple antennas. That opens up the possibility of beamforming and other MIMO techniques to enhance the network performance.  Second, one can minimize the maximum energy consumption of all users under imperfect CSI scenarios and develop robust algorithms against CSI imperfections. There are many more future directions, and this list is not exhaustive.

\appendix

\subsection{Proof of Proposition 1}\label{sec:Proposition1}		
The approximation in (\ref{399}) produces a tight lower bound of $\Gamma_k[n]$. This is because $\bar{\Gamma}_k[n]$ is a concave function. The gradient
of $\bar{\Gamma}_k[n]$ is a supper-gradient given by $	\bar{\Gamma}_k[n] \leq 	\Psi_k[n]$ \cite{zargari3}, where $\mathcal{H}=\{  u[n],r_k[n],t_k[n],\boldsymbol{q}_u,\boldsymbol{v}_u,\chi\}$ indicates the set of feasible solutions at iteration $t$. Besides, the equality holds when $\mathcal{H}=\mathcal{H}^{(t-1)}$, which confirms the tightness of the lower bound. By denoting $\Psi_k[n]$ at iteration $t$ as $\Psi^{(t)}_k[n]$, we have the following relations: $\Psi^{(t+1)}_k[n] \geq\Psi^{(t)}_k[n]$ which leads to the increase of the objective values of (P7) after each iteration of Algorithm 1. Accordingly, by solving the convex lower bound in (P4), the iterative-based SCA algorithm creates a sequence of feasible solutions, i.e., $\mathcal{H}^{(t+1)}$, which is monotonically increasing over each iteration. Thus, it is guaranteed to converge.

\subsection{Proof of Theorem 1}\label{sec:Theorem1}

According to the Jensen's inequality \cite{Boyd}, i.e., $\mathbb{E} \{\log_2 (1 + f(z)\} \leq \log_2 (1 + \mathbb{E} \{f(z)\})$, the following inequality holds
\begin{align}\label{Jensen}
&\mathbb{E}\{R_{k}[n]\}\leq s_k[n] \log_{2}\left( 1+\frac{ P_k g_k}{\sigma^{2}}\right),
\end{align}
where $g_k = \mathrm{E}\left\lbrace  \left|  {h}_{k,b}[n]+\bar{\rho}[n]\mathbf{h}^H_{k,u}[n]{\mathbf{\Theta}}[n]\mathbf{h}_{bu}[n]\right| ^2\right\rbrace$. Since the NLoS components of each link, i.e., ${h}_{k,u}^{\mathrm{NLoS}}[n]$, $\mathbf{h}_{k,b}^{\mathrm{NLoS}}[n]$, and $\mathbf{h}_{bu}^{\mathrm{NLoS}}[n]$ are independent of each other, the square term in $g_k$ can be written as $|b|^2+\mathrm{E}\{|a|^2\}+\mathrm{E}\{|c|^2\}+\mathrm{E}\{|d|^2\}+ \mathrm{E}\{|e|^2\}$, 	where $a=\sqrt{\frac{\beta_{k,b}}{K_{k,b}+1}}h_{k,b}^{\mathrm{NLoS}}[n]$ and
\begin{align}
&b=\sqrt{\frac{K_{k,b}\beta_{k,b}}{K_{k,b}+1}}h_{k,b}^{\mathrm{LoS}}[n]+\sqrt{\frac{K_{k,u}K_{bu}\beta_{bu}[n]\beta_{k,u}[n]}{(K_{k,u}+1)(K_{bu}+1)}}\nonumber\\& \times\bar{\rho}[n](\mathbf{h}_{k,u}^{\mathrm{LoS}}[n])^{H}{\mathbf{\Theta}}[n]\mathbf{h}_{bu}^{\mathrm{LoS}}[n],\nonumber\\
&c=\sqrt{\frac{K_{bu}\beta_{bu}[n]\beta_{k,u}[n]}{(K_{k,u}+1)(K_{bu}+1)}}\bar{\rho}[n](\mathbf{h}_{k,u}^{\mathrm{NLoS}}[n])^{H}{\mathbf{\Theta}}[n]\mathbf{h}_{bu}^{\mathrm{LoS}}[n],\nonumber\\
&d=\sqrt{\frac{K_{k,u}\beta_{bu}[n]\beta_{k,u}[n]}{(K_{k,u}+1)(K_{bu}+1)}}\bar{\rho}[n](\mathbf{h}_{k,u}^{\mathrm{LoS}}[n])^{H}{\mathbf{\Theta}}[n]\mathbf{h}_{bu}^{\mathrm{NLoS}}[n],\nonumber\\
&e=\sqrt{\frac{\beta_{bu}[n]\beta_{k,u}[n]}{(K_{k,u}+1)(K_{bu}+1)}}\bar{\rho}[n](\mathbf{h}_{k,u}^{\mathrm{NLoS}}[n])^{H}{\mathbf{\Theta}}[n]\mathbf{h}_{bu}^{\mathrm{NLoS}}[n].
\end{align}
By taking the term $c$ into consideration, this can be calculated as follows:
\begin{align}
c&=\frac{\bar{\rho}^2[n]K_{bu} \beta_{bu}[n]\beta_{k,u}[n]}{(K_{k,u}+1)(K_{bu}+1)}(\mathbf{h}_{bu}^{\mathrm{LoS}}[n])^{H}({\mathbf{\Theta}}[n])^{H}\nonumber\\&\times\mathrm{E}\left\lbrace \mathbf{h}_{k,u}^{\mathrm{NLoS}}[n](\mathbf{h}_{k,u}^{\mathrm{NLoS}}[n])^{H}\right\rbrace {\mathbf{\Theta}}[n]\mathbf{h}_{bu}^{\mathrm{LoS}}[n].
\end{align}
By adopting the following equalities: 
\begin{align}
&\mathrm{E}\lbrace \mathbf{h}_{k,u}^{\mathrm{NLoS}}[n](\mathbf{h}_{k,u}^{\mathrm{NLoS}}[n])^{H}\rbrace=\mathbf{I}_{M_xM_y},\nonumber\\&(\mathbf{h}_{bu}^{\mathrm{LoS}}[n])^{H}\mathbf{h}_{bu}^{\mathrm{LoS}}[n]=M_xM_y, ~({\mathbf{\Theta}}[n])^{H}{\mathbf{\Theta}}[n]=\mathbf{I}_{M_xM_y},
\end{align}
we can obtain $c=\frac{\bar{\rho}^2[n]M_xM_y K_{bu}\beta_{bu}[n]\beta_{k,u}[n]}{(K_{k,u}+1)(K_{bu}+1)}$. Also, for other terms, we have the following equations:
\begin{align}
&a=\frac{ \beta_{k,b}}{K_{k,b}+1},~d=\frac{\bar{\rho}^2[n]K_{k,u}M_xM_y\beta_{bu}[n]\beta_{k,u}[n]}{(K_{k,u}+1)(K_{bu}+1)},\nonumber\\&e=\frac{\bar{\rho}^2[n]M_xM_y\beta_{bu}[n]\beta_{k,u}[n]}{(K_{k,u}+1)(K_{bu}+1)}.
\end{align} 
Thus, the proof is completed.

\subsection{Proof of Proposition 2}\label{sec:Proposition2}	

As the phase shifts are obtained in closed-form expression, the main problem is divided into three subproblems which optimize user scheduling ($\boldsymbol{s}$), UAV trajectory/velocity ($\boldsymbol{q}_u,\boldsymbol{v}_u$), and IRS reflection coefficient ($\boldsymbol{\rho}$), via solving problems (P4), (P7), and (P9), while keeping the other two blocks of variables fixed. Let us define $f(\boldsymbol{s},\boldsymbol{q}_u,\boldsymbol{v}_u,\boldsymbol{\rho})$ as a function of $\boldsymbol{s}$, $\boldsymbol{q}_u$, $\boldsymbol{v}_u$, and $\boldsymbol{\rho}$ for the objective value of (P1). First, in step $2$ of Algorithm 2 with fixed variables $\boldsymbol{q}_u$, $\boldsymbol{v}_u$, and $\boldsymbol{\rho}$, problem (P1) is a LP problem and $\boldsymbol{s}^{(i\!+\!1)}$ is the optimal solution that maximize the value of the objective function. Accordingly, we have
\begin{equation}\label{43}
f(\boldsymbol{s}^{(i\!+\!1)},\boldsymbol{q}_u^{(i)},\boldsymbol{v}^{(i)}_u,\boldsymbol{\rho}^{(i)})\geq f(\boldsymbol{s}^{(i)},\boldsymbol{q}^{(i)}_u,\boldsymbol{v}^{(i)}_u,\boldsymbol{\rho}^{(i)}).
\end{equation}
Next, in step $3$ of Algorithm 2, $\boldsymbol{q}_u^{(i\!+\!1)}$ and $\boldsymbol{v}^{(i\!+\!1)}_u$ are the suboptimal UAV trajectory/velocity with given variables $\boldsymbol{s}^{(i\!+\!1)}$ and $\boldsymbol{\rho}^{(i)}$ to maximize $f$ via solving (P9). Thus, it guarantees that
\begin{equation}
f(\boldsymbol{s}^{(i\!+\!1)},\boldsymbol{q}_u^{(i\!+\!1)},\boldsymbol{v}^{(i\!+\!1)}_u,\boldsymbol{\rho}^{(i)})\geq f(\boldsymbol{s}^{(i\!+\!1)},\boldsymbol{q}^{(i)}_u,\boldsymbol{v}^{(i)}_u,\boldsymbol{\rho}^{(i)}).
\end{equation}
Finally, in step $4$ of Algorithm 2 with the given $\boldsymbol{s}^{(i\!+\!1)}$, $\boldsymbol{q}_u^{(i\!+\!1)}$, and $\boldsymbol{v}^{(i\!+\!1)}_u$, problem (P9) is solved to obtain a sub-optimal solution for $\boldsymbol{\rho}^{(i)}$, which yields:
\begin{equation}\label{45}
f(\boldsymbol{s}^{(i\!+\!1)},\boldsymbol{q}_u^{(i\!+\!1)},\boldsymbol{v}^{(i\!+\!1)}_u,\boldsymbol{\rho}^{(i\!+\!1)})\geq f(\boldsymbol{s}^{(i\!+\!1)},\boldsymbol{q}^{(i\!+\!1)}_u,\boldsymbol{v}^{(i\!+\!1)}_u,\boldsymbol{\rho}^{(i)}).
\end{equation}
According to \eqref{43}--\eqref{45}, we can conclude that
\begin{equation} 
f(\boldsymbol{s}^{(i\!+\!1)},\boldsymbol{q}_u^{(i\!+\!1)},\boldsymbol{v}^{(i\!+\!1)}_u,\boldsymbol{\rho}^{(i\!+\!1)})\geq f(\boldsymbol{s}^{(i\!+\!1)},\boldsymbol{q}^{(i\!+\!1)}_u,\boldsymbol{v}^{(i\!+\!1)}_u,\boldsymbol{\rho}^{(i\!+\!1)}),
\end{equation}
which indicates that the objective values of (P1) are monotonically increasing after each iteration of Algorithm 2.  Meanwhile,  the objective values of (P1) are non-negative. As a result, the proposed AO algorithm is guaranteed to converge. On the other hand, based on the fact that the initial point of each iteration is the ending point of the previous one, the algorithm continues running to achieve a better solution in each iteration. In other words, the objective function increases in each iteration or remains unchanged until the convergence is satisfied. Thus, the proof is completed.

\bibliographystyle{ieeetr}
\bibliography{ref}	

\begin{thebibliography}{10}

\bibitem{Yang6G}
P.~Yang, Y.~Xiao, M.~Xiao, and S.~Li, ``{6G} wireless communications: Vision
  and potential techniques,'' {\em {IEEE} Netw.}, no.~4, pp.~70--75, 2019.

\bibitem{Gong1}
S.~Gong, X.~Lu, D.~T. Hoang, D.~Niyato, L.~Shu, D.~I. Kim, and Y.~Liang,
  ``Toward smart wireless communications via intelligent reflecting surfaces:
  {A} contemporary survey,'' {\em {IEEE} Commun. Surv. Tutorials}, vol.~22,
  no.~4, pp.~2283--2314, 2020.

\bibitem{Zhang}
Q.~Wu and R.~Zhang, ``Towards smart and reconfigurable environment: Intelligent
  reflecting surface aided wireless network,'' {\em {IEEE} Commun. Mag.},
  vol.~58, no.~1, pp.~106--112, 2020.

\bibitem{Bariah}
L.~Bariah, L.~S. Mohjazi, S.~Muhaidat, P.~C. Sofotasios, G.~K. Kurt,
  H.~Yanikomeroglu, and O.~A. Dobre, ``A prospective look: Key enabling
  technologies, applications and open research topics in {6G} networks,'' {\em
  {IEEE} Access}, vol.~8, pp.~174792--174820, 2020.

\bibitem{Khalili2021Hetnets}
A.~Khalili, E.~M. Monfared, S.~Zargari, M.~R. Javan, N.~M. Yamchi, and E.~A.
  Jorswieck, ``Resource management for transmit power minimization in
  {UAV}-assisted {RIS} hetnets supported by dual connectivity,'' {\em {IEEE}
  Trans. Wirel. Commun.}, vol.~21, no.~3, pp.~1806--1822, 2022.

\bibitem{yu2021}
K.~Yu, X.~Yu, and J.~Cai, ``{UAVs} assisted intelligent reflecting surfaces
  {SWIPT} system with statistical {CSI},'' {\em {IEEE} J. Sel. Top Signal
  Process.}, 2021.

\bibitem{zargari2}
S.~Zargari, S.~Farahmand, and B.~Abolhassani, ``Joint design of transmit
  beamforming, {IRS} platform, and power splitting {SWIPT} receivers for
  downlink cellular multiuser {MISO},'' {\em Phys. Commun.}, p.~101413, 2021.

\bibitem{zargari3}
S.~Zargari, A.~Khalili, Q.~Wu, M.~R. Mili, and D.~W.~K. Ng, ``Max-min fair
  energy-efficient beamforming design for intelligent reflecting surface-aided
  {SWIPT} systems with non-linear energy harvesting model,'' {\em {IEEE} Trans.
  Veh. Technol.}, vol.~70, no.~6, pp.~5848--5864, 2021.

\bibitem{zargari4}
S.~Zargari, C.~Tellambura, and S.~Herath, ``Energy-efficient hybrid offloading
  for backscatter-assisted wirelessly powered mec with reconfigurable
  intelligent surfaces,'' {\em IEEE Trans. Mob. Comput.}, pp.~1--1, 2022.

\bibitem{zargari2021robust}
S.~Zargari, S.~Farahmand, B.~Abolhassani, and C.~Tellambura, ``Robust active
  and passive beamformer design for {IRS}-aided downlink {MISO} {PS-SWIPT} with
  a nonlinear energy harvesting model,'' {\em {IEEE} Trans. Green Commun.
  Netw.}, vol.~5, no.~4, pp.~2027--2041, 2021.

\bibitem{Ramezani}
B.~Lyu, P.~Ramezani, D.~T. Hoang, S.~Gong, Z.~Yang, and A.~Jamalipour,
  ``Optimized energy and information relaying in self-sustainable
  {IRS}-empowered {WPCN},'' {\em {IEEE} Trans. Commun.}, no.~1, pp.~619--633,
  2021.

\bibitem{you2021enabling}
C.~You, Z.~Kang, Y.~Zeng, and R.~Zhang, ``Enabling smart reflection in
  integrated air-ground wireless network: {IRS} meets {UAV},'' 2021.

\bibitem{Mohammadi2015}
M.~Mohammadi, H.~A. Suraweera, Y.~Cao, I.~Krikidis, and C.~Tellambura,
  ``Full-duplex radio for uplink/downlink wireless access with spatially random
  nodes,'' {\em {IEEE} Trans. Commun.}, vol.~63, no.~12, pp.~5250--5266, 2015.

\bibitem{Zhang:2015}
Z.~{Zhang}, X.~{Chai}, K.~{Long}, A.~V. {Vasilakos}, and L.~{Hanzo}, ``Full
  duplex techniques for {5G} networks: self-interference cancellation, protocol
  design, and relay selection,'' {\em IEEE Commun. Mag}, vol.~53, May 2015.

\bibitem{Pan2}
M.~Hua, L.~Yang, Q.~Wu, C.~Pan, C.~Li, and A.~L. Swindlehurst, ``{UAV}-assisted
  intelligent reflecting surface symbiotic radio system,'' {\em {IEEE} Trans.
  Wirel. Commun.}, 2021.

\bibitem{Pan1}
Y.~Pan, K.~Wang, C.~Pan, H.~Zhu, and J.~Wang, ``{UAV}-assisted and intelligent
  reflecting surfaces-supported terahertz communications,'' {\em {IEEE} Wirel.
  Commun. Lett.}, vol.~10, no.~6, pp.~1256--1260, 2021.

\bibitem{Kwan}
Z.~Wei, Y.~Cai, Z.~Sun, D.~W.~K. Ng, J.~Yuan, M.~Zhou, and L.~Sun, ``Sum-rate
  maximization for {IRS}-assisted {UAV} {OFDMA} communication systems,'' {\em
  {IEEE} Trans. Wirel. Commun.}, vol.~20, pp.~2530--2550, 2021.

\bibitem{JiaoFZZ20}
S.~Jiao, F.~Fang, X.~Zhou, and H.~Zhang, ``Joint beamforming and phase shift
  design in downlink {UAV} networks with {IRS}-assisted {NOMA},'' {\em J.
  Commun. Inf. Networks}, vol.~5, no.~2, pp.~138--149, 2020.

\bibitem{Zina_Mohamed}
Z.~Mohamed and S.~A{\"{\i}}ssa, ``Leveraging {UAV}s with intelligent reflecting
  surfaces for energy-efficient communications with cell-edge users,'' in {\em
  2020 {IEEE} International Conference on Communications Workshops, {ICC}
  Workshops 2020, June 7-11, 2020}, pp.~1--6, {IEEE}, 2020.

\bibitem{Xiaowei_Pang}
X.~Pang, M.~Sheng, N.~Zhao, J.~Tang, D.~Niyato, and K.~Wong, ``When {UAV} meets
  {IRS:} expanding air-ground networks via passive reflection,'' {\em {IEEE}
  Wirel. Commun.}, vol.~28, no.~5, pp.~164--170, 2021.

\bibitem{Hu2021}
S.~Hu, Z.~Wei, Y.~Cai, C.~Liu, D.~W.~K. Ng, and J.~Yuan, ``Robust and secure
  sum-rate maximization for multiuser {MISO} downlink systems with
  self-sustainable {IRS},'' {\em {IEEE} Trans. Commun.}, vol.~69, no.~10,
  pp.~7032--7049, 2021.

\bibitem{chen2022active}
S.~Zargari, A.~Hakimi, C.~Tellambura, and S.~Herath, ``Multiuser miso ps-swipt
  systems: Active or passive ris?,'' {\em IEEE Wirel. Commun.}, vol.~11, no.~9,
  pp.~1920--1924, 2022.

\bibitem{Jae_Cheol}
J.~C. Park, K.~Kang, and J.~Choi, ``Low-complexity algorithm for outage optimal
  resource allocation in energy harvesting-based {UAV} identification
  networks,'' {\em {IEEE} Commun. Lett.}, vol.~25, no.~11, pp.~3639--3643,
  2021.

\bibitem{Zheng2}
Y.~Zheng, S.~Bi, Y.~A. Zhang, X.~Lin, and H.~Wang, ``Joint beamforming and
  power control for throughput maximization in {IRS}-assisted {MISO} {WPCNs},''
  {\em {IEEE} Internet Things J.}, vol.~8, no.~10, pp.~8399--8410, 2021.

\bibitem{zargari}
S.~Zargari, A.~Khalili, and R.~Zhang, ``Energy efficiency maximization via
  joint active and passive beamforming design for multiuser {MISO} {IRS}-aided
  {SWIPT},'' {\em {IEEE} Wirel. Commun. Lett.}, vol.~10, no.~3, 2021.

\bibitem{Wei}
S.~Hu, Z.~Wei, Y.~Cai, D.~W.~K. Ng, and J.~Yuan, ``Sum-rate maximization for
  multiuser {MISO} downlink systems with self-sustainable {IRS},'' in {\em
  {IEEE} Global Commun. conf. ({GLOBECOM})}, pp.~1--7, 2020.

\bibitem{Zheng1}
Y.~Zheng, S.~Bi, Y.~J. Zhang, Z.~Quan, and H.~Wang, ``Intelligent reflecting
  surface enhanced user cooperation in wireless powered communication
  networks,'' {\em {IEEE} Wirel. Commun. Lett.}, vol.~9, no.~6, pp.~901--905,
  2020.

\bibitem{Gong}
Y.~Zou, S.~Gong, J.~Xu, W.~Cheng, D.~T. Hoang, and D.~Niyato, ``Joint energy
  beamforming and optimization for intelligent reflecting surface enhanced
  communications,'' in {\em {IEEE} Wirel. Commun. Netw. Conf. Workshops
  (WCNCW)}, pp.~1--6, 2020.

\bibitem{Nonlinear}
E.~Boshkovska, D.~W.~K. Ng, N.~Zlatanov, and R.~Schober, ``Practical non-linear
  energy harvesting model and resource allocation for {SWIPT} systems,'' {\em
  {IEEE} Commun. Lett.}, vol.~19, no.~12, pp.~2082--2085, 2015.

\bibitem{mahmoud2021}
A.~Mahmoud, S.~Muhaidat, P.~Sofotasios, I.~Abualhaol, O.~A. Dobre, and
  H.~Yanikomeroglu, ``Intelligent reflecting surfaces assisted {UAV}
  communications for {IoT} networks: Performance analysis,'' {\em {IEEE} trans.
  green commun. netw.}, 2021.

\bibitem{Nnamani}
C.~O. Nnamani, M.~R.~A. Khandaker, and M.~Sellathurai, ``Joint beamforming and
  location optimization for secure data collection in wireless sensor networks
  with {UAV}-carried intelligent reflecting surface,'' {\em arXiv preprint
  arXiv:2101.06565}, 2021.

\bibitem{Zhang3}
H.~Lu, Y.~Zeng, S.~Jin, and R.~Zhang, ``Aerial intelligent reflecting surface:
  Joint placement and passive beamforming design with {3D} beam flattening,''
  {\em {IEEE} Trans. Wirel. Commun.}, no.~7, pp.~4128--4143, 2021.

\bibitem{Taniya2021}
T.~Shafique, H.~Tabassum, and E.~Hossain, ``Optimization of wireless relaying
  with flexible {UAV}-borne reflecting surfaces,'' {\em {IEEE} Trans. Commun.},
  vol.~69, no.~1, pp.~309--325, 2021.

\bibitem{Wang}
X.~Lu, P.~Wang, D.~Niyato, D.~I. Kim, and Z.~Han, ``Wireless networks with {RF}
  energy harvesting: {A} contemporary survey,'' {\em {IEEE} Commun. Surv.
  Tutorials}, vol.~17, no.~2, pp.~757--789, 2015.

\bibitem{Zhan2018}
C.~Zhan, Y.~Zeng, and R.~Zhang, ``Energy-efficient data collection in {UAV}
  enabled wireless sensor network,'' {\em {IEEE} Wirel. Commun. Lett.}, vol.~7,
  no.~3, pp.~328--331, 2018.

\bibitem{keysight_2021}
``Product datasheet, 11667a power splitter, agilent technologies.''
  \url{https://www.keysight.com/us/en/product/11667B/power-splitter-dc-26-5-ghz.html},
  Feb 2021.

\bibitem{in3otd}
``Dual-core power splitters.''
  \url{https://www.qsl.net/in3otd/ham_radio/power_splitters/dual-core.html},
  May 2019.

\bibitem{Wang2020}
D.~Wang, F.~Rezaei, and C.~Tellambura, ``Performance analysis and resource
  allocations for a {WPCN} with a new nonlinear energy harvester model,'' {\em
  {IEEE} Open J. Commun. Soc.}, vol.~1, pp.~1403--1424, 2020.

\bibitem{Han2019}
Y.~Han, W.~Tang, S.~Jin, C.~Wen, and X.~Ma, ``Large intelligent
  surface-assisted wireless communication exploiting statistical {CSI},'' {\em
  {IEEE} Trans. Veh. Technol.}, vol.~68, no.~8, pp.~8238--8242, 2019.

\bibitem{Boyd}
S.~Boyd, S.~P. Boyd, and L.~Vandenberghe, {\em Convex optimization}.
\newblock Cambridge university press, 2004.

\bibitem{Huang}
F.~Huang, J.~Chen, H.~Wang, G.~Ding, Z.~Xue, Y.~Yang, and F.~Song,
  ``{UAV}-assisted {SWIPT} in internet of things with power splitting:
  Trajectory design and power allocation,'' {\em {IEEE} Access}, vol.~7, 2019.

\bibitem{Mo_Kang}
J.~Kang and C.~Chun, ``Joint trajectory design, {T}x power allocation, and {R}x
  power splitting for {UAV}-enabled multicasting {SWIPT} systems,'' {\em {IEEE}
  Syst. J.}, vol.~14, no.~3, pp.~3740--3743, 2020.

\bibitem{Wei_Wang}
W.~Wang, J.~Tang, N.~Zhao, X.~Liu, X.~Y. Zhang, Y.~Chen, and Y.~Qian, ``Joint
  precoding optimization for secure {SWIPT} in {UAV}-aided {NOMA} networks,''
  {\em {IEEE} Trans. Commun.}, vol.~68, no.~8, pp.~5028--5040, 2020.

\bibitem{DSixing_Yin}
S.~Yin, Y.~Zhao, L.~Li, and F.~R. Yu, ``{UAV}-assisted cooperative
  communications with time-sharing information and power transfer,'' {\em
  {IEEE} Trans. Veh. Technol.}, vol.~69, no.~2, pp.~1554--1567, 2020.

\bibitem{Cheol_Jeong}
C.~Jeong and S.~H. Chae, ``Simultaneous wireless information and power transfer
  for multiuser {UAV}-enabled {IoT} networks,'' {\em {IEEE} Internet Things
  J.}, vol.~8, no.~10, pp.~8044--8055, 2021.

\bibitem{Xiao2020}
Z.~Chu, W.~Hao, P.~Xiao, and J.~Shi, ``Intelligent reflecting surface aided
  multi-antenna secure transmission,'' {\em {IEEE} Wirel. Commun. Lett.},
  vol.~9, no.~1, pp.~108--112, 2020.

\bibitem{Henk_Wymeersch}
E.~Bj{\"{o}}rnson, H.~Wymeersch, B.~Matthiesen, P.~Popovski, L.~Sanguinetti,
  and E.~de~Carvalho, ``Reconfigurable intelligent surfaces: {A} signal
  processing perspective with wireless applications,'' {\em {IEEE} Signal
  Process. Mag.}, vol.~39, no.~2, pp.~135--158, 2022.

\bibitem{Christos_Liaskos}
C.~Liaskos, S.~Nie, A.~Tsioliaridou, A.~Pitsillides, S.~Ioannidis, and I.~F.
  Akyildiz, ``A new wireless communication paradigm through software-controlled
  metasurfaces,'' {\em {IEEE} Commun. Mag.}, vol.~56, no.~9, pp.~162--169,
  2018.

\bibitem{Silva}
S.~R. Silva, A.~Rahman, W.~de~Melo Kort-Kamp, J.~J. Rushton, J.~Singleton,
  A.~J. Taylor, D.~A. Dalvit, H.-T. Chen, and A.~K. Azad, ``Metasurface-based
  ultra-lightweight high-gain off-axis flat parabolic reflectarray for
  microwave beam collimation/focusing,'' {\em Scientific reports}, vol.~9,
  no.~1, pp.~1--7, 2019.

\bibitem{Torres}
J.~Torres-S{\'a}nchez, F.~L{\'o}pez-Granados, A.~I. De~Castro, and J.~M.
  Pe{\~n}a-Barrag{\'a}n, ``Configuration and specifications of an unmanned
  aerial vehicle ({UAV}) for early site specific weed management,'' {\em PloS
  one}, vol.~8, no.~3, p.~e58210, 2013.

\end{thebibliography}

\end{document}